\numberwithin{equation}{section}
\theoremstyle{plain}
\newtheorem{thm}{\protect\theoremname}[section]
\theoremstyle{definition}
\newtheorem{defn}[thm]{\protect\definitionname}
\theoremstyle{remark}
\newtheorem{rem}[thm]{\protect\remarkname}
\theoremstyle{plain}
\newtheorem{prop}[thm]{\protect\propositionname}
\theoremstyle{plain}
\newtheorem{lem}[thm]{\protect\lemmaname}
\providecommand{\definitionname}{Definition}
\providecommand{\lemmaname}{Lemma}
\providecommand{\propositionname}{Proposition}
\providecommand{\remarkname}{Remark}
\providecommand{\theoremname}{Theorem}
\global\long\def\Real{\mathbb{R}}
\global\long\def\E{\mathbb{E}}
\global\long\def\Int{\mathbb{Z}}
\global\long\def\Indicator{\mathbb{I}}
\global\long\def\n{\boldsymbol{n}}
\global\long\def\N{\boldsymbol{N}}
\global\long\def\Y{\boldsymbol{Y}}
\global\long\def\m{\boldsymbol{m}}
\global\long\def\blambda{\boldsymbol{\lambda}}
\global\long\def\e{\boldsymbol{e}}
\global\long\def\bphi{\boldsymbol{\phi}}
\global\long\def\bpi{\boldsymbol{\pi}}
\global\long\def\balpha{\boldsymbol{\alpha}}
\global\long\def\Q{\boldsymbol{Q}}
\global\long\def\R{\boldsymbol{R}}
\global\long\def\Z{\boldsymbol{Z}}
\global\long\def\Nc{\mathcal{N}_{C}^{(\epsilon)}}
\global\long\def\Nd{\mathcal{N}_{D}^{(\epsilon)}}
\global\long\def\xe{x^{(\epsilon)}}
\newcommand{\dominate}{\preccurlyeq}
\begin{document}
	
\begin{frontmatter}	
\title{Centralized Congestion Control and Scheduling in a Datacenter}
\runtitle{Centralized Congestion Control and Scheduling}

\begin{aug}
	\author{\fnms{Devavrat} \snm{Shah}\thanksref{t1}\ead[label=e1]{devavrat@mit.edu }}
	\address{\printead{e1}}
	\and
	\author{\fnms{Qiaomin} \snm{Xie}\thanksref{t1}\ead[label=e2]{qxie@mit.edu}}
	\address{\printead{e2}}
	\runauthor{D. Shah and Q. Xie}
	\affiliation{Massachusetts Institute of Technology}
\end{aug}



\begin{abstract}

We consider the problem of designing a packet-level congestion control
and scheduling policy for datacenter networks. Current datacenter
networks primarily inherit the principles that went into the design of Internet,
where congestion control and scheduling are distributed. While distributed
architecture provides robustness, it suffers in terms of performance. 
Unlike Internet, data center is fundamentally a ``controlled'' 
environment. This raises the possibility of designing a centralized architecture
to achieve better performance. Recent solutions such as Fastpass \cite{perry2014fastpass}
and Flowtune \cite{perry17flowtune} have provided the proof of this concept. This
raises the question: what is theoretically optimal performance achievable 
in a data center?

We propose a centralized policy that guarantees a per-flow end-to-end
flow delay bound of $O$(\#hops $\times$ flow-size $/$ gap-to-capacity). 
Effectively such an end-to-end delay will be experienced by flows even if we
removed congestion control and scheduling constraints as the resulting 
queueing networks can be viewed as the classical {\em reversible} multi-class 
queuing network, which has a product-form stationary distribution. In the language
of \cite{harrison2014bandwidth}, we establish that {\em baseline} performance for
this model class is achievable.

Indeed, as the key contribution of this work, we propose a method to
{\em emulate} such a reversible queuing network while satisfying congestion
control and scheduling constraints. Precisely, our policy is an emulation
of Store-and-Forward (SFA) congestion control in conjunction with  
Last-Come-First-Serve Preemptive-Resume (LCFS-PR) scheduling policy. 

\end{abstract}




\end{frontmatter}

\section{Introduction}

With an increasing variety of applications and workloads being hosted
in datacenters, it is highly desirable to design datacenters that
provide high throughput and low latency. Current datacenter networks
primarily employ the design principle of Internet, where congestion
control and packet scheduling decisions are distributed among endpoints
and routers. While distributed architecture provides scalability and
fault-tolerance, it is known to suffer from throughput loss and high
latency, as each node lacks complete knowledge of entire network conditions
and thus fails to take a globally optimal decision. 

The datacenter network is fundamentally different from the wide-area
Internet in that it is under a single administrative control. Such
a single-operator environment makes a centralized architecture a feasible option.
Indeed, there have been recent proposals for centralized control
design for data center networks \cite{perry17flowtune,perry2014fastpass}.
In particular, Fastpass \cite{perry2014fastpass} uses a centralized
arbiter to determine the path as well as the time slot of transmission
for each packet, so as to achieve zero queueing at switches. Flowtune
\cite{perry17flowtune} also uses a centralized controller, but congestion
control decisions are made at the granularity of a flowlet, with the
goal of achieving rapid convergence to a desired rate allocation.
Preliminary evaluation of these approaches demonstrates promising
empirical performance, suggesting the feasibility of a centralized design
for practical datacenter networks.

Motivated by the empirical success of the above work, we are interested
in investigating the theoretically optimal performance achievable by a centralized
scheme for datacenters. Precisely, we consider a centralized architecture,
where the congestion control and packet transmission are delegated
to a centralized controller. The controller collects all dynamic endpoints
and switches state information, and redistributes the congestion control
and scheduling decisions to all switches/endpoints. We propose a packet-level
policy that guarantees a per-flow end-to-end flow delay bound of $\ensuremath{O}(\#\text{hops}\ensuremath{\times}\text{flow-size}\ensuremath{/}\text{gap-to-capacity})$.
To the best of our knowledge, our result is the first one to show
that it is possible to achieve such a delay bound in a network with
congestion control and scheduling constraints. Before describing the
details of our approach, we first discuss related work addressing
various aspects of the network resource allocation problem. 

\subsection{Related Work}

There is a very rich literature on congestion control and scheduling. The literature
on congestion control has been primarily driven by bandwidth allocation in the
context of Internet. The literature on packet scheduling has been historically 
driven by managing supply-chain (multi-class queueing networks), telephone
networks (loss-networks), switch / wireless networks (packet
switched networks) and now data center networks. In what follows, we 
provide brief overview of representative results from theoretical and 
systems literature.

\medskip
\noindent\textbf{Job or Packet Scheduling:} A scheduling policy in the context 
of classical multi-class queueing networks essentially specifies the service 
discipline at each queue, i.e., the order in which waiting jobs are served. Certain
service disciplines, including the last-come-first-serve preemptive-resume
(LCFS-PR) policy and the processor sharing discipline, are known to
result in quasi-reversible multi-class queueing networks, which have
a product form equilibrium distribution \cite{kelly1979reversibility}. The crisp 
description of the equilibrium distribution makes these disciplines remarkably 
tractable analytically. 

More recently, the scheduling problems for switched networks, which
are special cases of stochastic processing networks as introduced
by Harrison \cite{harrison2000brownian}, have attracted a lot of
attention starting \cite{tassiulas1992maxweight} including some recent
examples \cite{walton2014concave,shah2014SFA,maguluri2015heavy}. Switched networks are 
queueing networks where there are constraints on which queues can be served 
simultaneously. They effectively model a variety of interesting applications, 
exemplified by wireless communication networks, and input-queued switches for 
Internet routers. The MaxWeight/BackPressure policy, introduced by 
Tassiulas and Ephremides for wireless communication \cite{tassiulas1992maxweight, mckeown1996achieving}, 
have been shown to achieve a maximum throughput stability for switched networks. 
However, the provable delay bounds of this scheme scale with the number of queues in the
network. As the scheme requires maintaining one queue per route-destination at each one,
the scaling can be potentially very bad. For instance, recently Gupta
and Javidi \cite{gupta2007routing} showed that such an algorithm
can result in very poor delay performance via a specific example.
Walton \cite{walton2014concave} proposed a proportional scheduler
which achieves throughput optimality as the BackPressure policy,
while using a much simpler queueing structure with one queue per link.
However, the delay performance of this approach is unknown. Recently
Shah, Walton and Zhong \cite{shah2014SFA} proposed a policy where
the scheduling decisions are made to approximate a queueing network
with a product-form steady state distribution. The policy achieves optimal
queue-size scaling for a class of switched networks.In a recent work, Theja and Srikant \cite{maguluri2015heavy} established heavy-traffic optimality of
MaxWeight policy for input-queued switches. 

\medskip
\noindent\textbf{Congestion control:} A long line of literature on congestion
control began with the work of Kelly, Maulloo and Tan \cite{kelly1998rate}, where they
introduced an optimization framework for flow-level resource allocation in
the Internet. In particular, the rate control algorithms are developed
as decentralized solutions to the utility maximization problems. The
utility function can be chosen by the network operator to achieve
different bandwidth and fairness objectives. Subsequently, this optimization
framework has been applied to analyze existing congestion control
protocols (such as TCP) \cite{low2002vegas,low2002internet,mo2000fair}; a
comprehensive overview can be found in  \cite{srikant_book}.
Roberts and Massouli\'{e} \cite{massoulie2000bandwidth} applied this
paradigm to settings where flows stochastically depart and arrive,
known as bandwidth sharing networks. The resulting proportional fairness
policies have been shown to be maximum stable \cite{bonald2001fairness,massoulie2007fairness}.
The heavy traffic behavior of proportional fairness has been subsequently
studied \cite{shah2014qualitative,kang2009diffusion}. Another bandwidth
allocation of interest is the store-and-forward allocation (SFA) policy,
which was first introduced by Massouli\'{e} (see Section 3.4.1 in~\cite{proutiere_thesis}) and
later analyzed in the thesis of Prouti\`{e}re \cite{proutiere_thesis}. The SFA policy has the remarkable
property of insensitivity with respect to service distributions,
as shown by Bonald and Prouti\`{e}re \cite{bonald2003insensitive}, and
Zachary\cite{zachary2007insensitivity}. Additionally, this policy
induces a product-form stationary distribution \cite{bonald2003insensitive}.
The relationship between SFA and proportional fairness has been explored
\cite{massoulie2007fairness}, where SFA was shown to converge to
proportional fairness with respect to a specific asymptote. 

\medskip
\noindent\textbf{Joint congestion control and scheduling:} More recently, the
problem of designing joint congestion-control and scheduling mechanisms
has been investigated \cite{eryilmaz2006joint,lin2004joint,stolyar2005maximizing}.
The main idea of these approaches is to combine a queue-length-based
scheduler and a distributed congestion controller developed for wireline
networks, to achieve stability and fair rate allocation. For instance,
the joint scheme proposed by Eryilmaz and Srikant combines the BackPressure
scheduler and a primal-dual congestion controller for wireless networks.
This line of work focuses on addressing the question of the stability. The
Lyapunov function based delay (or queue-size) bound for such algorithm are
relatively very poor. It is highly desirable to design a joint mechanism that is provably
throughput optimal and has low delay bound. Indeed, the work of Moallemi and Shah \cite{moallemi2010flow} was an attempt in this direction, where they
developed a stochastic model that jointly captures the packet- and 
flow-level dynamics of a network, and proposed a joint policy based on 
$\alpha$-weighted policies. They argued that in a certain asymptotic regime 
(critically loaded fluid model) the resulting algorithm induces queue-sizes that are
within constant factor of optimal quantities. However, this work stops short of 
providing non-asymptotic delay guarantees.

\medskip
\noindent\textbf{Emulation:} In our approach we utilize the concept of emulation,
which was introduced by Prabhakar and Mckeown \cite{prabhakar1999speedup}
and used in the context of bipartite matching. Informally, a network
is said to emulate another network, if the departure processes from
the two networks are identical under identical arrival processes.
This powerful technique has been subsequently used in a variety of
applications \cite{jagabathula2008delay_scheduling,shah2014SFA,gamal2006throughput_delay,chuang1999matching}.
For instance, Jagabathula and Shah designed a delay optimal scheduling
policy for a discrete-time network with arbitrary constraints, by
emulating a quasi-reversible continuous time network \cite{jagabathula2008delay_scheduling};
The scheduling algorithm proposed by Shah, Walton and Zhong \cite{shah2014SFA}
for a single-hop switched network, effectively emulates the bandwidth sharing
network operating under the SFA policy. However, it is unknown how
to apply the emulation approach to design a joint congestion control
and scheduling scheme.

\medskip
\noindent\textbf{Datacenter Transport:}
Here we restrict to system literature in the context of datacenters.
Since traditional TCP developed for wide-area Internet does not meet
the strict low latency and high throughput requirements in datacenters,
new resource allocation schemes have been proposed and deployed \cite{alizadeh2010DCTCP,alizadeh2013pfabric,nagaraj2016numfabric,hong2012pdq,perry17flowtune,perry2014fastpass}.
Most of these systems adopt distributed congestion control schemes,
with the exception of Fasspass \cite{perry2014fastpass} and Flowtune
\cite{perry17flowtune}.

DCTCP \cite{alizadeh2010DCTCP} is a delay-based (queueing) congestion
control algorithm with a similar control protocol as that in TCP. It aims to keep the switch queues small, by leveraging Explicit Congestion Notification (ECN) to provide multi-bit feedback to the end points. Both pFabric \cite{alizadeh2013pfabric} and
PDQ \cite{hong2012pdq} aim to reduce flow completion time, by utilizing
a distributed approximation of the shortest remaining flow first policy.
In particular, pFabric uses in-network packet scheduling to decouple
the network's scheduling policy from rate control. NUMFabric \cite{nagaraj2016numfabric}
is also based on the insight that utilization control and network
scheduling should be decoupled. In particular, it combines a packet
scheduling mechanism based on weighted fair queueing (WFQ) at the
switches, and a rate control scheme at the hosts that is based on
the network utility maximization framework.

Our work is motivated by the recent successful stories that demonstrate
the viability of centralized control in the context of datacenter
networks \cite{perry2014fastpass,perry17flowtune}. Fastpass \cite{perry2014fastpass}
uses a centralized arbiter to determine the path as well as the time
slot of transmission for each packet. To determine the set of sender-receiver
endpoints  that can communicate in a timeslot, the arbiter views
the entire network as a single input-queued switch, and uses a heuristic
to find a matching of endpoints in each timeslot. The arbiter then
chooses a path through the network for each packet that has been allocated
timeslots. To achieve zero-queue at switches, the arbiter assigns
packets to paths such that no link is assigned multiple packets in
a single timeslot. That is, each packet is arranged to arrive at a
switch on the path just as the next link to the destination becomes
available. 

Flowtune \cite{perry17flowtune} also uses a centralized controller,
but congestion control decisions are made at the granularity of a
flowlet, which refers to a batch of packets backlogged at a sender.
It aims to achieve fast convergence to optimal rates by avoiding packet-level
rate fluctuations. To be precise, a centralized allocator computes
the optimal rates for a set of active flowlets, and those rates are
updated dynamically when flowlets enter or leave the network. In particular,
the allocated rates maximize the specified network utility, such
as proportional fairness.

\medskip
\noindent\textbf{Baseline performance:} In a recent work Harrison et al.~\cite{harrison2014bandwidth} studied the \emph{baseline performance} for congestion control, that is, an achievable benchmark for the delay performance in flow-level models. Such a benchmark provides an upper bound on the optimal achievable performance. In particular, baseline performance in flow-level models is exactly achievable by the store-and-forward allocation (SFA) mentioned earlier. On the other hand, the work by Shah et al.~\cite{shah2014SFA} established baseline performance for scheduling in packet-level networks. They proposed a scheduling policy that effectively emulates the bandwidth sharing network under the SFA policy. The results for both flow- and packet-level models boil down to a product-form stationary distribution, where each component of the product-form behaves like an $M/M/1$ queue. However, no baseline performance has been established for a hybrid model with flow-level congestion control and packet scheduling. 

\smallskip

This is precisely the problem we seek to address in this paper.
The goal of this paper is to understand what is the best performance achievable by centralized designs in datacenter networks. In particular, we aim to establish baseline performance for datacenter networks with congestion control and scheduling constraints. 
To investigate this problem, we consider a datacenter network with a tree topology, and focus on a hybrid model with simultaneous dynamics of flows and packets. Flows arrive at each endpoint according to an exogenous process and wish to transmit some amount of data through the network. As in standard congestion control algorithms, the flows generate packets at their ingress to the network. The packets travel to their respective destinations along links in the network. We defer the model details to Section \ref{sec:Model-and-Notation}.

\subsection{Our approach}

The control of a data network comprises of two sub-problems: congestion control and scheduling. 
On the one hand, congestion control aims to ensure fair sharing of network resources among endpoints and to minimize congestion inside the network. The congestion control policy determines the rates at which each endpoint injects data into the internal network for transmission. On the other hand, the internal network maintains buffers for packets that are in transit across the network, where the queues at each buffer are managed according to some packet scheduling policy.

Our approach addresses these two sub-problems simultaneously, with the overall architecture shown in Figure \ref{fig:Overview}. The system decouples congestion control and in-network packet scheduling by maintaining two types of buffers: \emph{external} buffers which store arriving flows of different types, and \emph{internal} buffers for packets in transit across the network. In particular, there is a separate external buffer for each type of arriving flows. Internally, at each directed link~$l$ between two nodes $(u,v)$ of the network, there is an internal buffer for storing packets waiting at node~$u$ to pass through link~$l$. Conceptually, the internal queueing structure 
corresponds to the output-queued switch fabric of ToR and core switches in a datacenter, so each directed link is abstracted as a queueing server for packet transmission.

Our approach employs independent mechanisms for the two sub-problems. The congestion control policy uses only the state of external buffers for rate allocation, and is hence decoupled from packet scheduling. The rates allocated for a set of flows that share the network will change only when new flows arrive or when flows are admitted into the internal network for transmission. For the internal network, we adopt a packet scheduling mechanism based on the dynamics of internal buffers.

\begin{figure}
	\centering{}\includegraphics[width=0.99\columnwidth]{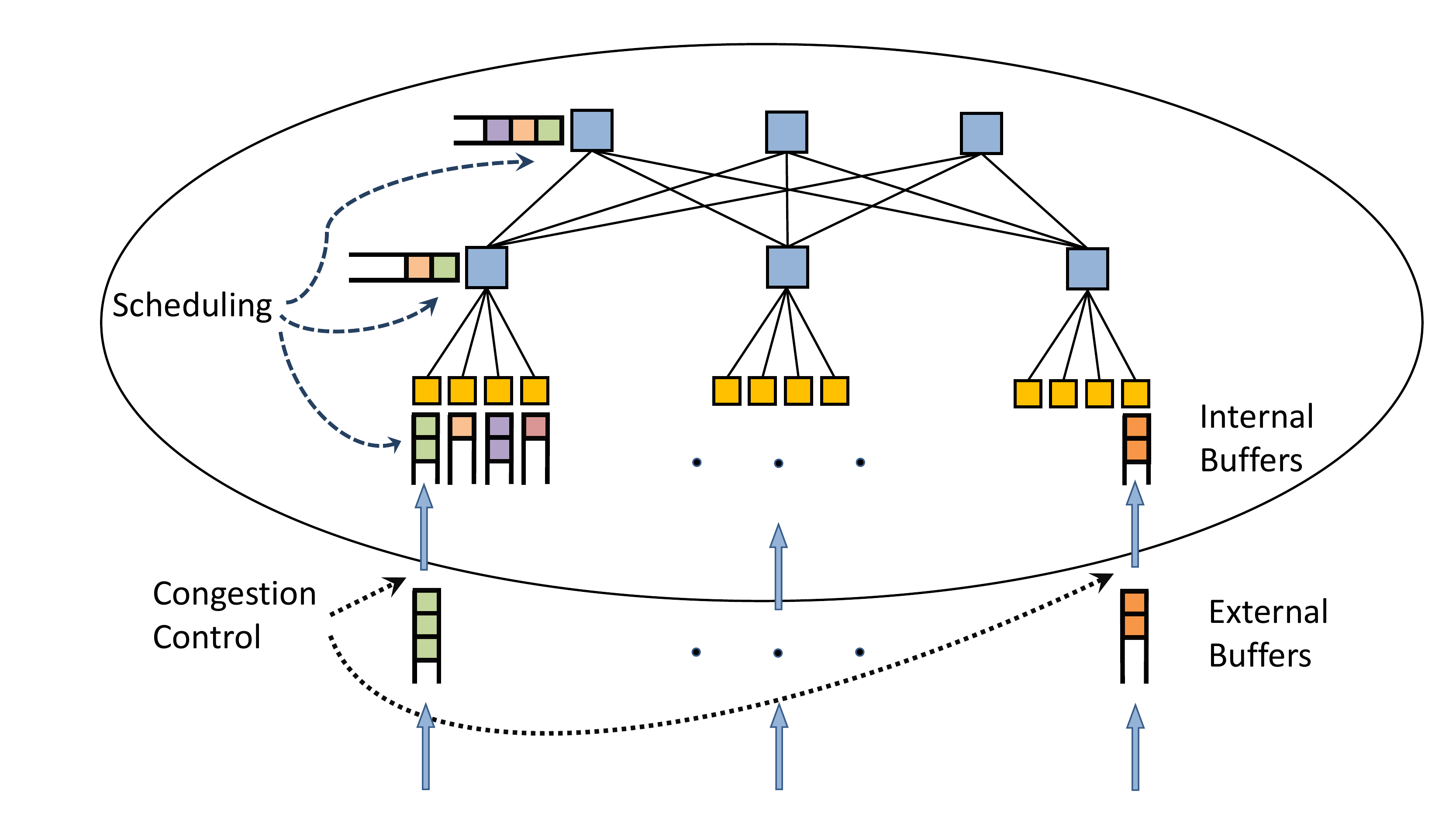}\caption{Overview of the congestion control and scheduling scheme.\label{fig:Overview}}
\end{figure}

Figure \ref{fig:congestion_control} illustrates our congestion control policy. The key idea
is to view the system as a bandwidth sharing network with flow-level
capacity constraints. The rate allocated to each flow buffered at source nodes is determined by an online algorithm that only uses
the queue lengths of the external buffers, and satisfies the capacity constraints. Another key
ingredient of our algorithm is a mechanism that translates the allocated rates to congestion control decisions. In particular,
we implement the congestion control algorithm at the granularity of flows, as opposed
to adjusting the rates on a packet-by-packet basis as in classical
TCP.

We consider a specific bandwidth allocation scheme called the store-and-forward
algorithm (SFA), which was first considered by Massouli\'{e} and later
discussed in \cite{bonald2003insensitive,kelly2009resource,proutiere_thesis,walton2009fairness}.
The SFA policy has been shown to be insensitive with
respect to general service time distributions \cite{zachary2007insensitivity},
and result in a reversible network with Poisson arrival processes~\cite{walton2009fairness}. The bandwidth sharing
network under SFA has a product-form queue size distribution in equilibrium.
Given this precise description of the stationary distribution, we
can obtain an explicit bound on the number of flows waiting at the
source nodes, which has the desirable form of $\ensuremath{O}(\#\text{hops}\ensuremath{\times}\text{flow-size}\ensuremath{/}\text{gap-to-capacity})$. 
Details of the congestion control policy description and analysis
are given in Section \ref{sec:congestion_control} to follow. 

\begin{figure}
	\begin{centering}
		\includegraphics[width=0.9\columnwidth]{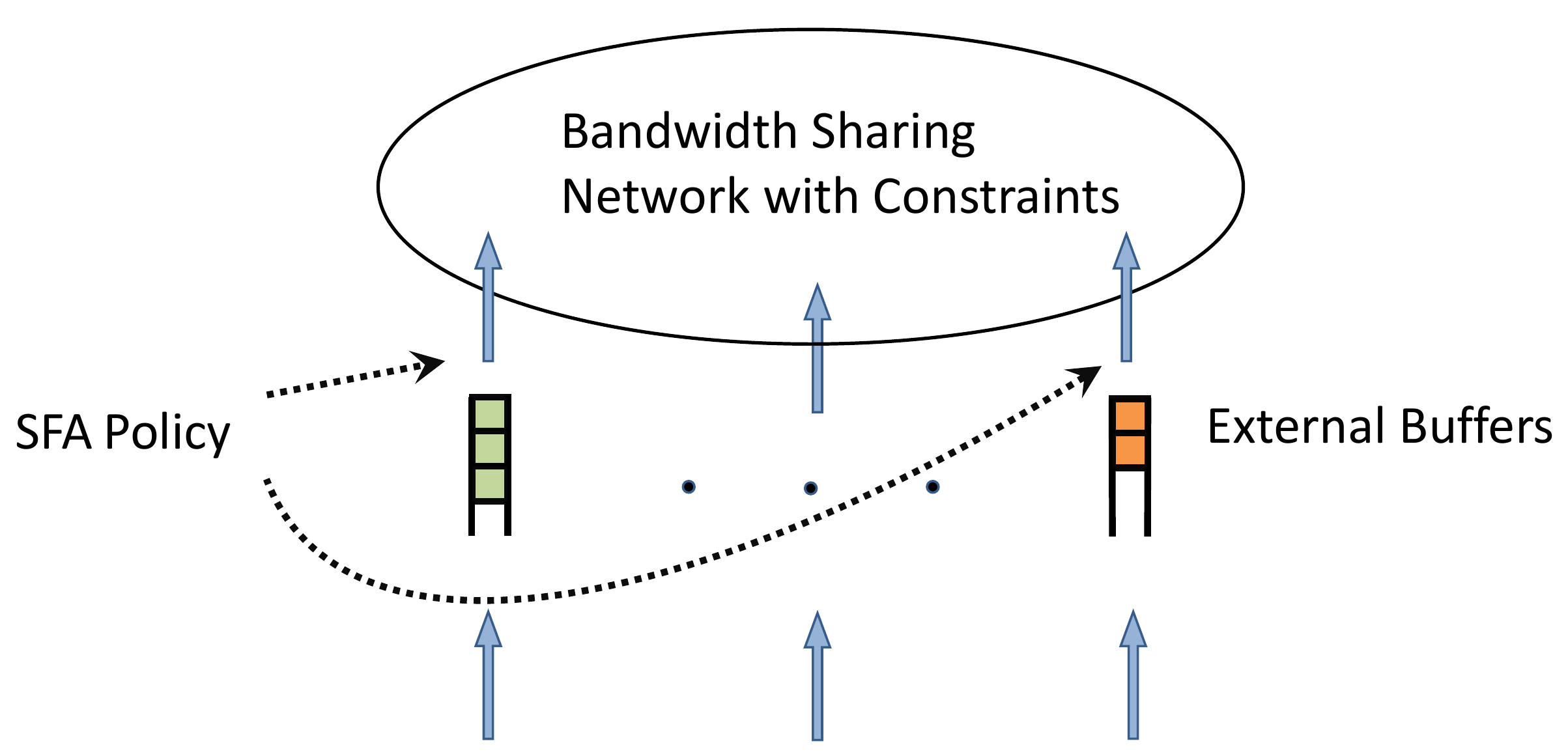}
		\par\end{centering}
	\caption{A congestion control policy based on the SFA algorithm for a bandwidth
		sharing network.\label{fig:congestion_control}}
	
\end{figure}

We also make use of the concept of emulation to design a packet scheduling
algorithm in the internal network, which is operated in discrete time.
In particular, we propose and analyze a scheduling mechanism that
is able to emulate a continuous-time \emph{quasi-reversible} network,
which has a highly desirable queue-size scaling. 

Our design consists of three elements. First, we specify the granularity
of timeslot in a way that maintains the same throughput as in a network
without the discretization constraint. By appropriately choosing the granularity,
we are able to address a general setting where flows arriving on
each route can have arbitrary sizes, as opposed to prior work that assumed unit-size
flows. Second, we consider a continuous-time network
operated under the Last-Come-First-Serve Preemptive-Resume (LCFS-PR)
policy. If flows on each route are assumed to arrive according a Poisson
process, the resulting queueing network is quasi-reversible with a
product-form stationary distribution. In this continuous-time setting,
we will show that the network achieves a flow delay bound of $\ensuremath{O}(\#\text{hops }\ensuremath{\times}\text{ flow-size }\ensuremath{/}\text{ gap-to-capacity})$.
Finally, we design a feasible scheduling policy for the discrete-time
network, which achieves the same throughput and delay bounds as the
continuous-time network. The resulting scheduling scheme is illustrated
in Figure \ref{fig:scheduling}.

\begin{figure}
	\begin{centering}
		\includegraphics[width=0.7\columnwidth]{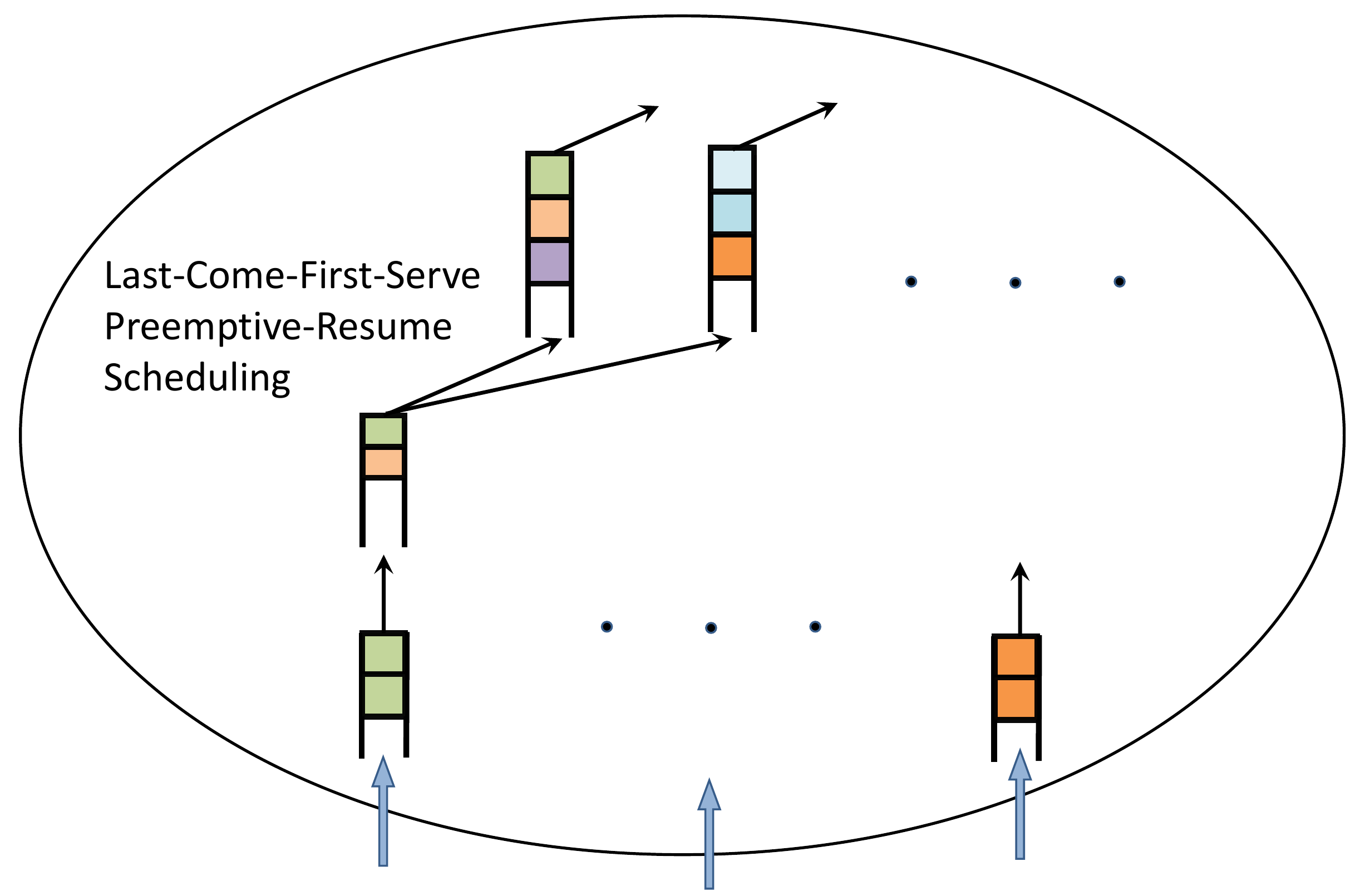}
		\par\end{centering}
	\caption{An adapted LCFS-PR scheduling algorithm \label{fig:scheduling}}
\end{figure}

\subsection{Our Contributions}

The main contribution of the paper is a centralized policy for both congestion control
and scheduling that achieves a per-flow end-to-end delay bound $\ensuremath{O}(\#\text{hops }\ensuremath{\times}$
$\text{flow-size }\ensuremath{/}\text{ gap-to-capacity})$. Some
salient aspects of our result are:
\begin{enumerate}
	\item The policy addresses both the congestion control and scheduling problems,
	in contrast to other previous work that focused on either congestion
	control or scheduling. 
	\item We consider flows with variable sizes.
	\item We provide per-flow delay bound rather than an aggregate bound. 
	\item Our results are non-asymptotic, in the sense that they hold for any admissible load.
	\item A central component of our design is the emulation of continuous-time
	quasi-reversible networks with a product-form stationary distribution.
	By emulating these queueing networks, we are able to translate the
	results therein to the network with congestion and scheduling constraints. 
	This emulation result can be of interest in its own right.
\end{enumerate}

\subsection{Organization}The remaining sections of the paper are organized
as follows. In Section \ref{sec:Model-and-Notation} we describe the
network model. The main results of the paper are presented in Section
\ref{sec:results}. The congestion control algorithm is described
and analyzed in Section \ref{sec:congestion_control}. Section \ref{sec:scheduling}
details the scheduling algorithm and its performance properties. We
discuss implementation issues and conclude the paper in Section \ref{sec:discussion}.

\section{Model and Notation\label{sec:Model-and-Notation}}

In this section, we introduce our model. We consider multi-rooted
tree topologies that have been widely adopted for datacenter networks~\cite{greenberg2009VL2_Clos,al2008scalable_fattree,singh2015Clos_Google}.
Such tree topologies can be represented by a directed
acyclic graph (DAG) model, to be described in detail shortly. We then describe
the network dynamics based on the graph model. At a higher level,
exogenous flows of various types arrive according to a continuous-time
process. Each flow is associated with a route, and has some amount
of data to be transferred along its route. As the internal
network is operated in discrete time, i.e., time is slotted for unit-size
packet transmission, each flow is broken into a number of packets.
Packets of exogenous flows would be buffered at the 
source nodes upon arrival, and then injected into the internal network according a specified
congestion control algorithm --- an example being the classical TCP. 
The packets traverse along nodes on their respective routes, queueing in 
buffers at intermediate nodes. It is worth noting that a flow departs the 
network once all of its packets reach the destination.

\subsection{An Acyclic Model}

Before providing the construction of a directed acyclic graph model
for the datacenter networks with tree topologies, we first introduce
the following definition of dominance relation between two queues. 
\begin{defn}[Dominance relation]
	For any two queues $Q_{1}$ and $Q_{2},$ $Q_{1}$ is said to be dominated
	by $Q_{2},$ denoted by $Q_{1}\dominate Q_{2},$ if there exists a flow
	for $Q_{2}$ such that its packets pass through $Q_{1}$ before entering
	$Q_{2}.$ 
\end{defn}
For a datacenter with a tree topology, each node $v$ maintains two
separate queues: a queue  $Q_{v}^{u}$ that sends data ``up'' the tree,
and another queue  $Q_{v}^{d}$ that sends data ``down'' the tree. Here the notions
of "up" and "down" are defined with respect to a fixed, chosen root of
the tree --- packets that are going towards nodes closer to the root 
are part of the ``up'' queue, while others are part of the ``down'' queue.

Consider the queues at leaf nodes of the tree. By definition, the up queues
cannot dominate any other queues, and the down queues
at leaf nodes cannot be dominated by any other queues. Therefore we can construct an order starting
with the up queues at leaf nodes $\{v_{1},\ldots,v_{k}\}$, i.e.,
$Q_{v_{1}}^{u},\ldots,Q_{v_{k}}^{u}.$ The down queues at leaf nodes
$Q_{v_{1}}^{d},\ldots,Q_{v_{k}}^{d}$ are placed at the end of the
order. Now we can remove the leaf nodes and consider the remaining
sub-tree. Inductively, we can place the up queues of layer-$l$ nodes
right after the up queues of layer-$(l+1)$ nodes, and place the down
queues of layer-$l$ nodes before those of layer-$(l+1)$ nodes. It is
easy to verify that the resulting order gives a partial order of queues
with respect to the dominance relation. Therefore, we can construct
a directed acyclic graph for the tree topology, where
each up/down queue $q_i$ is abstracted as a node. There exists a directed
link from $q_{i}$ to $q_{j}$ if there exists a flow such that its
packets are routed to $q_{j}$ right after departing $q_{i}.$

In the following, we represent the datacenter fabric by a directed acyclic
graph $G(\mathcal{V},\mathcal{E}),$ where $\mathcal{V}$ is the node
set, and $\mathcal{E} \subset \mathcal{V} \times \mathcal{V}$ is the set of directed links connecting nodes in
$\mathcal{V}$. The graph is assumed to be connected. Let $N := |\mathcal{V}|$ and
$M := |\mathcal{E}|$. Each entering flow is routed through the network
from its source node to its destination node. There are $J$ classes
of traffic routes indexed by $\mathcal{J}=\{1,2,\cdots,J\}$. Let
notation $v\in j$ or $j\in v$ denote that route~$j$ passes through
node $v.$ We denote by $\R\in\Real^{M\times J}$ the routing matrix
of the network, which represents the connectivity of nodes in different
routes. That is, 
\[
R_{mj}=\begin{cases}
1, & \text{if route \ensuremath{j} traverses link \ensuremath{m}}\\
0, & \text{otherwise}.
\end{cases}
\]

\subsection{Network Dynamics}

Each flow is characterized by the volume of data to be transferred
along its route, referred to as flow size. Note that the size of a
flow remains constant when it passes through the network. The \emph{flow size set} $\mathcal{X},$ which consists of all the sizes of flows arriving
at the system, is assumed to be countable.
Each flow can be classified according to its associated route and
size, which defines the \emph{type} of this flow. Let $\mathcal{T}:=\{(j,x):\;j\in\mathcal{J},\;x\in\mathcal{X}\}$
denote the set of flow types.

\medskip
\noindent\textbf{Flow arrivals. }External flow arrivals occur according to
a continuous-time stochastic process. The inter-arrival times of
type-$(j,x)$ flows are assumed to be i.i.d. with mean $\frac{1}{\lambda_{j,x}}$
(i.e., with rate $\lambda_{j,x}$) and finite variance. Let $\boldsymbol{\lambda}=(\lambda_{j,x}:\;j\in\mathcal{J},\;x\in\mathcal{X})$
denote the arrival rate vector (of dimension $J|\mathcal{X}|$). We denote the traffic intensity of route $j$
by the notation $\alpha_{j}=\sum_{x\in\mathcal{X}}\;x\lambda_{j,x}.$
Let 
\[
f_{v}=\sum_{j:\;j\in v}\sum_{x\in\mathcal{X}}\;x\lambda_{j,x}
\]
be the average service requirement for flows arriving at node $v$
per unit time. We define the effective load $\rho_{j}(\boldsymbol{\lambda})$
along the route $j$ as 
\[
\rho_{j}(\boldsymbol{\lambda})=\max_{v\in j}\;f_{v}.
\]

As mentioned before, exogenous flows are queued at the 
external buffers upon arrival. The departure process
of flows from the external buffer is determined by the congestion control
policy. The flows are considered to have departed from the external buffer (not necessarily from the network though) as soon
as the congestion control policy {\em fully accepts} it in the internal network for
transmission. 

\medskip
\noindent\textbf{Discretization.} The internal network is operated in discrete
time. Each flow is packetized. Packets of
injected flows become available for transmission only at integral
times, for instance, at the beginning of the timeslots. For a system
with timeslot of length $\epsilon,$ a flow of size $x$ is decomposed
into $\lceil\frac{x}{\epsilon}\rceil$ packets of equal size $\epsilon$
(the last packet may be of size less than a full timeslot). In a unit
timeslot each node is allowed to transmit at most one $\epsilon$-size
packet. If a node has less than a full timeslot worth of data to send,
unused network bandwidth is left wasted. In order to eliminate bandwidth
wastage and avoid throughput loss, the granularity of timeslot should
be chosen appropriately (cf.\ Section \ref{sec:scheduling}). 

Note that over a time interval $[k\epsilon,(k+1)\epsilon)$,
the transmission of packets at nodes occurs instantly at time $k\epsilon,$
while the arrival of new flows to the network occurs continuously
throughout the entire time interval.

\subsubsection{Admissible Region without Discretization}

By considering the network operating in continuous
time without congestion control, we can characterize the admissible region of the network without
discretization. Note that the service rate of each node is assumed
to be deterministically equal to 1. We denote by $\Q(t)=(Q_{1}(t),Q_{2}(t),\ldots,Q_{N}(t))$
the queue-size vector at time $t\in[0,\infty),$ where $Q_{i}(t)$
is the number of flows at node $i.$ Note that the process $\Q(t)$
alone is not Markovian, since each flow size is deterministic and the interarrival times have a general distribution. We
can construct a Markov description of the system, given by a process
$\Y(t),$ by augmenting the queue-size vector $\Q(t)$ with the remaining
size of the set of flows on each node, and the residual arrival times of flows of different types. The network system is said
to be stable under a policy if the Markov process $\Y(t)$ is positive
recurrent. An arrival rate vector is called (strictly) \emph{admissible }if and only
if the network with this arrival rate vector is stable under some policy. 
The \emph{admissible region} of the network is defined as the set of all admissible arrival rate vectors. 

Define a set ${\Lambda}\subset\Real_+^{J|\mathcal{X}|}$ of arrival rate vectors as follows:
\begin{equation}
{\Lambda} = \left\{ \blambda\in\Real_{+}^{J|\mathcal{X}|}:\;f_{v}=\sum_{j:\;j\in v}\sum_{x\in\mathcal{X}}\;x\lambda_{j,x} < 1,\forall v\in\mathcal{V}\right\} .\label{eq:capacity}
\end{equation}
It is easy to see that if $\blambda\notin {\Lambda},$ then the network is not stable under any scheduling policy, because in this case there exists at least one queue that receives load beyond its capacity. Therefore, the set ${\Lambda}$ contains the admissible region. As we will see in the subsequent sections, ${\Lambda}$ is in fact equal to the admissible region. In particular, for each $\blambda$ in $ \Lambda $, our proposed policy stabilizes the network with congestion control and discretization constraints. 

\subsection{Notation}

The end-to-end delay of a flow refers to the time interval between
the flow's arrival and the moment that all of its packets
reach the destination. We use $D^{(j,x),i}(\blambda)$ to denote
the end-to-end delay of the $i$-th flow of size $x$ along route
$j,$ for $j=1,\ldots,J,$ and $x\in\mathcal{X},$ under the joint
congestion control and scheduling scheme, and the arrival rate vector $\blambda.$
Note that $D^{(j,x),i}$ includes two parts: the time that a flow
spends on waiting in the external buffer, denoted by $D_{W}^{(j,x),i},$ and the
delay it experiences in the internal network, denoted by $D_{S}^{(j,x),i}.$ 
Then the sample mean of the waiting delay and that of the scheduling
delay over all type-$(j,x)$ flows are:
\begin{align*}
D_{W}^{(j,x)}(\blambda) & =\limsup_{k\rightarrow\infty}\frac{1}{k}\sum_{i=1}^{k}D_{W}^{(j,x),i}(\blambda),\\
D_{S}^{(j,x)}(\blambda) & =\limsup_{k\rightarrow\infty}\frac{1}{k}\sum_{i=1}^{k}D_{S}^{(j,x),i}(\blambda).
\end{align*}
When the steady state distributions exist, $D_{W}^{(j,x)}(\blambda)$
and $D_{S}^{(j,x)}(\blambda)$ are exactly the expected waiting delay
and scheduling delay of a type-$(j,x)$ respectively.
The expected total delay of a type-$(j,x)$ flow is: 
\[
D^{(j,x)}(\blambda)=D_{W}^{(j,x)}(\blambda)+D_{S}^{(j,x)}(\blambda).
\]

\section{Main Results\label{sec:results}}

In this section, we state the main results of our paper. 
\begin{thm}
\label{thm:delay-bound} Consider a network described in Section \ref{sec:Model-and-Notation}
with a Poisson arrival rate vector $\boldsymbol{\lambda}\in\Lambda$.
With an appropriate choice of time-slot granularity $\epsilon>0,$ 
there exists a joint congestion control and scheduling scheme for the 
discrete-time network such that for each flow type $(j, x)\in\mathcal{T}$,  
\begin{align*}
D^{(j,x),\epsilon} (\blambda)
& \leq C \left( \frac{xd_{j}}{1-\rho_{j}(\boldsymbol{\lambda})} + d_{j} \right)
\quad \mbox{with probability} ~1, 
\end{align*}
where $C > 0$ is a universal constant.
\end{thm}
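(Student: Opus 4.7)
The plan is to decompose the end-to-end delay of a type-$(j,x)$ flow as $D^{(j,x),\epsilon}(\blambda) = D_W^{(j,x),\epsilon}(\blambda) + D_S^{(j,x),\epsilon}(\blambda)$ and bound each piece separately, where $D_W$ is the time spent in the external buffer under the congestion control policy and $D_S$ is the time spent traversing the internal network under the scheduling policy. Each of the two bounds will be of order $xd_j/(1-\rho_j(\blambda)) + d_j$, and adding them yields the claim after absorbing constants into $C$.

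For the waiting component $D_W^{(j,x),\epsilon}$, I would operate the external buffers as a bandwidth sharing network under the store-and-forward (SFA) allocation. By the insensitivity and quasi-reversibility results of Bonald--Prouti\`{e}re and Zachary, such a network with Poisson flow arrivals has a product-form stationary distribution whose per-route marginal behaves like an $M/M/1$ queue with load $\rho_j(\blambda)$. From this product form I would read off $\E[\text{number of waiting type-}(j,x)\text{ flows}] = O(\lambda_{j,x} x d_j / (1-\rho_j))$, the extra factor of $x d_j$ reflecting that SFA drains a size-$x$ flow at a rate limited by the slowest of the $d_j$ links on route $j$. Little's law, combined with the positive recurrence of the Markov state $\Y(t)$ and the ergodic theorem applied to the sample averages defining $D_W^{(j,x)}$, then converts this stationary expectation into the desired almost-sure limsup bound.

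For the scheduling component $D_S^{(j,x),\epsilon}$, I would first analyze an idealized continuous-time internal network operated under LCFS-PR. With Poisson injections at the rates produced by the SFA controller, this multi-class queueing network is quasi-reversible in the sense of Kelly and admits an explicit product-form stationary distribution across its queues. A type-$(j,x)$ flow visits at most $d_j$ queues, and at each queue $v$ on its route the expected sojourn time is of order $x/(1-f_v)$; summing over the route and taking the worst-case queue gives a per-flow bound of $O(xd_j/(1-\rho_j))$ in continuous time. To lift this to the discrete-time network, I would pick the slot length $\epsilon$ small enough that $\max_v f_v < 1$ is preserved in the slotted model and that each flow of size $x$ is cleanly packetized into $\lceil x/\epsilon\rceil$ packets, and then run an emulation argument in the spirit of Prabhakar--McKeown and Shah--Walton--Zhong: the slotted LCFS-PR scheme couples sample-pathwise to the continuous-time network so that each packet's departure is delayed by at most $O(\epsilon)$ per hop, which accounts for the additive $d_j$ term in the bound.

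The main obstacle, as I see it, is the interface between the two subsystems. The LCFS-PR analysis requires flows to enter the internal network as Poisson streams, yet these streams are themselves the output of the SFA-controlled external buffers. I expect to resolve this by invoking the fact that a quasi-reversible queue in equilibrium emits a Poisson departure process: once the external SFA network is stationary, it injects Poisson flow streams into the internal network, which then reaches its own LCFS-PR product-form stationarity, and the joint Markov process is positive recurrent on all of $\Lambda$ (which is what allows $\Lambda$ in the theorem to be the full capacity region rather than a strict subset). Assembling these pieces --- SFA for the external buffer, quasi-reversibility to preserve the Poisson character at the interface, LCFS-PR product form for the internal network, and emulation to move from continuous to discrete time --- and combining the two $O(xd_j/(1-\rho_j))$ bounds completes the proof with a universal constant $C$ independent of $\blambda$ and of the flow type $(j,x)$.
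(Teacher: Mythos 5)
Your proposal is correct and follows essentially the same route as the paper: decompose $D^{(j,x),\epsilon}=D_W^{(j,x)}+D_S^{(j,x),\epsilon}$, bound the waiting part via the SFA bandwidth-sharing network's product form and Little's law, use reversibility of that network so the injections into the internal network are Poisson, and bound the scheduling part by the continuous-time LCFS-PR quasi-reversible network together with a sample-path emulation in discrete time with suitably chosen $\epsilon$. The only cosmetic difference is bookkeeping of the additive $d_j$ term, which in the paper arises mainly from the size rounding $x^{(\epsilon)}\leq x+\epsilon$ combined with choosing $\epsilon\leq 1-\rho_j(\blambda)$, rather than from per-hop emulation slack alone.
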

The result stated in Theorem \ref{thm:delay-bound} holds for a network
with an arbitrary acyclic topology. In terms of the delay bound, it
is worth noting that a factor of the form $\frac{x}{1-\rho_{j}(\blambda)}$
is inevitable; in fact, the $\frac{1}{1-\rho}$ scaling behavior has
been universally observed in a large class of queueing systems. For
instance, consider a work-conserving $M/D/1$ queue where unit-size
jobs or packets arrive as a Poisson process with rate $\rho < 1$
and the server has unit capacity. Then, both average delay and queue-size
scale as $1/(1-\rho)$ as $\rho \to 1$. 

Additionally, the bounded delay implies that the system can be ``stabilized''
by properly choosing the timeslot granularity. In particular, our proposed algorithm 
achieves a desirable delay bound of $$O(\#\text{hops }\times \text{ flow-size }/ \text{ gap-to-capacity})$$
for any $\blambda$ that is admissible. That is, system is throughput optimal as well.

Theorem \ref{thm:delay-bound} assumes that flows arrive as Poisson
processes. The results can be extended to more general arrival processes
with i.i.d. interarrival times with finite means, by applying the
procedure of \emph{Poissonization} considered in \cite{gamal2006throughput_delay,jagabathula2008delay_scheduling}.
This will lead to similar bounds as stated in Theorem \ref{thm:delay-bound} but with a (larger) constant $C$ 
that will also depend on the distributional characterization of the arrival process; 
see Section~\ref{sec:discussion} for details.
\begin{rem}
Consider the scenario that the size of flows on route $j$ has a distribution
equal to a random variable $X_{i}.$ Then Theorem \ref{thm:delay-bound}
implies that the expected delay for flows along route $j$ can be
bounded as $D^{j,\epsilon}(\blambda)\leq C\left(\frac{\E[X_{j}]d_{j}}{1-\rho_{j}(\blambda)}+d_j \right).$
\end{rem}

In order to characterize the network performance under the proposed joint scheme, we establish delay bounds achieved by the proposed congestion control policy and the scheduling algorithm, separately. The following theorems summarize the performance properties of our approach. 

The first theorem argues that by using an appropriate congestion control policy, the  delay of flows due to waiting at source nodes before
getting ready for scheduling (induced by congestion control) is well behaved. 

\begin{thm}
\label{thm:congestion} Consider a network described in Section \ref{sec:Model-and-Notation}
with a Poisson arrival rate vector $\boldsymbol{\lambda}\in\Lambda$,
operating under the SFA congestion control policy described in Section
\ref{sec:congestion_control}. Then, for each flow type $(j,x)\in\mathcal{T}$,
the per-flow waiting delay of type-$ (j,x) $ flows is bounded as
\begin{align*}
D_{W}^{(j,x)} (\blambda)& \leq\frac{xd_{j}}{1-\rho_{j}(\boldsymbol{\lambda})}, \quad \text{with probability}~1.
\end{align*}
\end{thm}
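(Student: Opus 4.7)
The plan is to exploit the product-form stationary distribution of the bandwidth-sharing network under the SFA congestion-control policy, and then convert mean queue lengths into per-flow waiting delays via Little's law together with an ergodicity argument.

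First, I would model the external buffers as a bandwidth-sharing network with capacity constraints $f_v < 1$ for every $v\in\mathcal{V}$, where each type-$(j,x)$ flow is treated as a job requiring $x$ units of work on route $j$. Operating this network under SFA, I would invoke the classical results of Massouli\'e, Bonald--Proutière, and Zachary cited earlier in the paper: for Poisson flow arrivals the underlying Markov process is quasi-reversible, it is positive recurrent for all $\blambda \in \Lambda$, and its stationary distribution has an explicit product form. Moreover, by the insensitivity property of SFA, the mean number of type-$(j,x)$ flows present in the external buffer in steady state depends on the flow-size distribution only through the load~$x\lambda_{j,x}$ it contributes.

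Second, I would extract the required marginal from the SFA product form. The key feature to exploit is that route $j$ traverses $d_j$ queues, each of which is loaded at most at $\rho_j(\blambda)$; the SFA product form effectively behaves, along route $j$, like a tandem of $d_j$ single-server queues each of load~$\rho_j$. Combining this structural fact with the per-component $M/M/1$-type mean $\rho/(1-\rho)$ yields, after routine bookkeeping,
\begin{equation*}
\E\!\left[N_{j,x}^{*}\right] \;\leq\; \frac{\lambda_{j,x}\, x\, d_j}{1-\rho_j(\blambda)},
\end{equation*}
where $N_{j,x}^{*}$ is the stationary number of type-$(j,x)$ flows in the external buffer.

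Third, Little's law applied to type-$(j,x)$ flows (which enter the external buffer at rate $\lambda_{j,x}$ and depart when fully admitted into the internal network) gives
\begin{equation*}
\E\!\left[D_W^{(j,x)}\right] \;=\; \frac{\E[N_{j,x}^{*}]}{\lambda_{j,x}} \;\leq\; \frac{x\, d_j}{1-\rho_j(\blambda)}.
\end{equation*}
Since the SFA-driven Markov process is positive recurrent for every $\blambda\in\Lambda$, the ergodic theorem promotes this expectation bound to the almost-sure statement on the $\limsup$ of empirical means that defines $D_W^{(j,x)}(\blambda)$, which is precisely the claim.

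The main obstacle I expect is teasing the factor $d_j$ cleanly out of the SFA stationary distribution. The product form involves a normalizing constant over all feasible occupancy vectors subject to the capacity constraints at each node on route $j$, so one must either (i) perform the combinatorial summation over these occupancy vectors and bound the partition function by that of a tandem of $d_j$ independent $M/M/1$ queues, or (ii) directly couple the SFA network to such a tandem via a dominance/stochastic-ordering argument. Either route is standard but technical; this step is where the ``number of hops'' dependence is really earned, while the Little's-law and ergodicity steps are essentially automatic once positive recurrence under SFA (also coming from the product-form analysis) is in hand.
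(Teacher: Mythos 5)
Your overall strategy matches the paper's: the policy admits a flow into the internal network exactly when its copy departs a virtual bandwidth-sharing network run under SFA, so the external waiting time equals the sojourn time in that virtual network (the paper records this as a separate lemma, and you assume it implicitly, which is fine since it holds by construction); one then computes the stationary mean number of type-$(j,x)$ flows, converts it to a mean delay via Little's law, uses $\frac{1}{1-f_{v}}\leq\frac{1}{1-\rho_{j}(\blambda)}$ for every $v\in j$ to pick up the factor $d_{j}$, and upgrades the expectation to the almost-sure statement about empirical means by ergodicity of the SFA network.

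The genuine gap is exactly the step you defer to ``routine bookkeeping'': the estimate $\E[N_{j,x}]\leq \lambda_{j,x}\,x\,d_{j}/(1-\rho_{j}(\blambda))$ is asserted from the heuristic that the SFA product form ``behaves like a tandem of $d_{j}$ single-server queues,'' and the two routes you sketch for justifying it (bounding the normalizing constant by that of independent $M/M/1$ queues, or a stochastic-dominance coupling) are neither carried out nor standard; no off-the-shelf dominance result of that form for SFA is cited or known to be elementary. The paper closes this step with an exact identity rather than a bound: the SFA stationary distribution aggregates the product-form distribution $\tilde{\bpi}$ of a multiclass quasi-reversible processor-sharing network, $\bpi(\n)=\sum_{\m\in U(\n)}\tilde{\bpi}(\m)$ (Bonald--Prouti\`{e}re, Walton, Kelly et al.), under which the queue at resource $v$ behaves in isolation as a queue with load $g_{v}=f_{v}$, so $\E[M_{v}]=\frac{f_{v}}{1-f_{v}}$, and the per-class splitting (Kelly, Theorem 3.10) gives mean $\frac{x\lambda_{j,x}}{1-f_{v}}$ type-$(j,x)$ flows at $v$; summing over $v\in j$ yields $\E[N_{j,x}]=\sum_{v\in j}\frac{x\lambda_{j,x}}{1-f_{v}}$ exactly, after which Little's law and $f_{v}\leq\rho_{j}(\blambda)$ give the claim. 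Your intuition about where the $d_{j}$ comes from is correct, but a complete proof must invoke (or reprove) this aggregation identity and the per-class marginal computation; as written, the central inequality rests on an undeveloped dominance or partition-function argument.
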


The second result is about scheduling variable length packets in the internal network in discrete time. If the network could
operate in continuous time, arrival processes were Poisson and packet scheduling in the internal network were Last-Come-First-Serve Preemptive
Resume (LCFS-PR), then the resulting delay would be the desired quantity due to the classical result about product-form stationary
distribution of quasi-reversible queueing networks, cf. \cite{bcmp1975network, kelly1979reversibility}. However, in our setting, since network operates in discrete time, we cannot use LCFS-PR (or for that matter any of the quasi-reversible policies). As the main contribution of this paper, we show that there is a
way to {\em adapt} LCFS-PR to the discrete time setting so that for each sample-path, the departure time of each flow is bounded above by its departure time from the ideal continuous-time network. Such a powerful emulation leads to Theorem \ref{thm:scheduling}
stated below. We note that \cite{jagabathula2008delay_scheduling} provided such an emulation scheme when all flows had the same (unit) packet size. Here we provide a generalization
of such a scheme for the setting where variable flow sizes are present.\footnote{We note that the result stated in \cite{jagabathula2008delay_scheduling} requires
that the queuing network be acyclic.}
\begin{thm}
\label{thm:scheduling} Consider a network described in Section \ref{sec:Model-and-Notation}
with a Poisson arrival rate vector $\boldsymbol{\lambda}\in\Lambda$, operating
with an appropriate granularity of timeslot $\epsilon>0,$ under the
adapted LCFS-PR scheduling policy to be described in Section \ref{sec:scheduling}.
Then, for each $ (j,x) \in \mathcal{T} $, 
the per-flow scheduling delay of type-$(j,x)$ flows is bounded as 
\begin{align*}
D_{S}^{(j,x),\epsilon}(\blambda) 
& \leq C' \left( \frac{xd_{j}}{1-\rho_{j}(\boldsymbol{\lambda})} + d_{j} \right)
 \quad \mbox{with probability}~1,
\end{align*}
where $C'>0$ is a universal constant. 
\end{thm}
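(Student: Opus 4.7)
The plan is to prove Theorem~\ref{thm:scheduling} by reducing the discrete-time scheduling problem to an idealized continuous-time network whose delay admits an explicit analysis, and then exhibiting a sample-path emulation of the latter by the former. Concretely, consider the continuous-time internal network in which type-$(j,x)$ flows arrive as independent Poisson processes of rate $\lambda_{j,x}$ and every node serves packets at unit rate under an LCFS-PR discipline that treats each flow as a customer carrying deterministic service requirement $x$ at each node on its route. Because LCFS-PR is a symmetric (Kelly-type) service discipline, the resulting multi-class queueing network is quasi-reversible and admits a product-form stationary distribution, with the marginal occupancy at each node $v$ behaving like the number-in-system of a symmetric $M/G/1$ queue with load $f_v < 1$.

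The first step is to turn this product-form description into a per-flow delay bound in continuous time. By PASTA together with the symmetric-service form of the sojourn-time distribution in quasi-reversible networks, a tagged type-$(j,x)$ flow's conditional expected sojourn time at node $v$ given its size $x$ equals $x/(1-f_v)$. Summing along the $d_j$ hops of route $j$ and using $\rho_j(\blambda) = \max_{v\in j} f_v$ gives
\begin{equation*}
\E\bigl[D_S^{(j,x)}(\blambda)\bigr] \;\le\; \sum_{v\in j} \frac{x}{1-f_v} \;\le\; \frac{x\, d_j}{1-\rho_j(\blambda)}.
\end{equation*}
Upgrading the expectation bound to the almost-sure $\limsup$ sample-mean bound stated in the theorem follows from positive recurrence of the augmented Markov process $\Y(t)$ (itself implied by $\blambda\in\Lambda$) together with the ergodic theorem applied to the stationary Palm distribution of flow arrivals.

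The second step is the discrete-time side. I would choose the timeslot $\epsilon>0$ small enough that, for every $x\in\mathcal{X}$, packetizing each flow into $\lceil x/\epsilon\rceil$ packets of size $\epsilon$ keeps the effective per-node load $\sum_{j\ni v,x} \lceil x/\epsilon\rceil\,\epsilon\,\lambda_{j,x}$ strictly below $1$ and bounded away from $1$ by a constant multiple of the original gap $1-f_v$; the existence of such $\epsilon$ uses countability of $\mathcal{X}$ and the slackness in $\Lambda$. I would then specify the adapted LCFS-PR rule: at each slot boundary $k\epsilon$, the packet sent on each outgoing link is taken from the most recently arrived flow currently queued for that link, with preemption respected across slots. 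Coupling the arrivals of the discrete- and continuous-time systems on one probability space, I would prove by induction over the ordered sequence of arrival and departure epochs that for every flow the departure time in the discrete-time system exceeds its departure time in the continuous-time system by at most an additive $O(d_j\,\epsilon)$ term, one $\epsilon$ per hop. Combining this sample-path domination with the bound of the first step immediately yields the claimed
\begin{equation*}
D_S^{(j,x),\epsilon}(\blambda) \;\le\; C'\!\left( \frac{x\, d_j}{1-\rho_j(\blambda)} + d_j \right) \quad \text{a.s.}
\end{equation*}

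The hardest step is the sample-path emulation in the presence of variable flow sizes. The unit-size version proved in \cite{jagabathula2008delay_scheduling} relies on atomic, equal-sized packets so that the LCFS priority order is easy to mirror slot by slot. With variable sizes, a newly arriving large flow preempts ongoing service of many predecessors, and when several flows are eligible for LCFS service at the same link one must decide exactly which $\epsilon$-packet to dispatch so as to preserve the coupling. My plan is to introduce, for each node $v$ and each LCFS priority rank, a potential measuring the discrepancy between the residual work under the two policies, and to show that across a single epoch this potential never increases by more than one $\epsilon$-unit. This is the main obstacle and will require delicate bookkeeping of fractional service remainders at slot boundaries and a careful proof that the adapted discipline never idles a link while a higher-LCFS-priority packet is available; getting these two invariants right is the crux of extending the unit-size emulation to the variable-size regime.
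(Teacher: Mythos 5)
Your first step (the product-form/LCFS-PR analysis of the continuous-time network and the conversion to an almost-sure sample-mean bound via ergodicity) is sound and is essentially the paper's analysis of its reference network. The genuine gap is in the second step, the emulation, which is both the crux and the part you leave as a plan rather than a proof. More importantly, the quantitative target you set for that plan is not achievable: you compare the discrete-time system against a continuous-time network serving the \emph{original} sizes $x$, and claim a sample-path domination with only an additive $O(d_j\,\epsilon)$ slack, ``one $\epsilon$ per hop.'' But packetization inflates each flow's work from $x$ to $\epsilon\lceil x/\epsilon\rceil$, and under LCFS-PR these inflations accumulate within a busy period: the first-arriving flow of a busy cycle containing $k$ flows departs last, after essentially the sum of all inflated sizes, so its departure in the discrete system can lag the original-size continuous system by nearly $k\epsilon$, with $k$ unbounded along the sample path. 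No potential-function bookkeeping can rescue a per-hop $O(\epsilon)$ bound against that reference system; the discrepancy is inherently proportional to the number of flows in the preemption chain, not to the number of hops.

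The paper sidesteps exactly this by building the rounding into the reference system: the continuous-time network $\mathcal{N}_C^{(\epsilon)}$ already serves the inflated sizes $x^{(\epsilon)}=\epsilon\lceil x/\epsilon\rceil$ (multiples of $\epsilon$), so there is no size discrepancy to control, and the discrete-time policy is defined by emulation --- a flow is scheduled at node $v$ at time $\epsilon\lceil\tau/\epsilon\rceil$, where $\tau$ is its arrival time at $v$ in the simulated $\mathcal{N}_C^{(\epsilon)}$, with priorities given by those simulated arrival times rather than by the actual discrete arrival times as in your rule. The key lemma is then the exact statement $\Delta\le\epsilon\lceil\delta/\epsilon\rceil$ for every flow at every node, proved by induction along a topological order of the DAG with a nested induction on the number of flows in a busy cycle; feasibility of the policy (each flow's actual arrival $A$ satisfies $A\le\epsilon\lceil\tau/\epsilon\rceil$) is precisely what this induction delivers, and it also requires the whole-flow eligibility convention (a flow's arrival at a node is the arrival of its last packet), which your rule leaves ambiguous. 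The additive $d_j$ term in the theorem then comes not from per-hop slot alignment but from $x^{(\epsilon)}\le x+\epsilon$ combined with the choice of $\epsilon$ in Eq.~(\ref{eq:epsilon}), which guarantees both $1-f_v^{(\epsilon)}\ge\frac{C_0-1}{C_0}(1-f_v)$ and $\epsilon\le 1-\rho_j(\blambda)$. To repair your argument you should either adopt the rounded-size reference network and the simulated-arrival-time priorities, or else prove a comparison in expectation (not per sample path) between the original-size and rounded-size continuous networks, which is a different and harder statement than the one you propose.
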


As discussed, for Theorem~\ref{thm:scheduling} to be useful, it is essential to have the flow departure process from the first stage (i.e., the congestion 
control step) be Poisson. To that end, we state the following Lemma, which follows from an application of 
known results on the reversibility property of the bandwidth-sharing network under the SFA policy~\cite{walton2009fairness}.
 \begin{lem} \label{lem:poisson}
 	Consider the network described in Theorem~\ref{thm:congestion}. Then for each $(j,x)\in\mathcal{T},$ the departure times of type-$(j,x)$ flows form an independent Poisson process of rate $\lambda_{j,x}.$
 \end{lem}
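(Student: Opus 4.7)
The plan is to reduce the claim to the reversibility result of Walton~\cite{walton2009fairness} for bandwidth-sharing networks under SFA, and then invoke the standard output theorem for reversible Markov processes (the multi-class generalization of Burke's theorem). First, I would set up the SFA-controlled external buffer as a multi-class bandwidth-sharing network in which each flow type $(j,x)\in\mathcal{T}$ is treated as its own class, with Poisson arrivals at rate $\lambda_{j,x}$ and deterministic (or more generally, arbitrary) service requirement $x$ along route $j$. Because $\blambda\in\Lambda$ the network is stable (this is exactly the content of Theorem~\ref{thm:congestion}, which gives finite waiting times a.s.), so the underlying Markov description admits a unique stationary distribution; I would run the process in stationarity.

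Next, I would invoke the reversibility result of Walton~\cite{walton2009fairness}: the stationary bandwidth-sharing network under SFA is a reversible Markov process, with a product-form stationary distribution on the queue-length vector indexed by classes. I would frame this in a form that applies with variable flow sizes by treating each $(j,x)$ as a distinct class whose service requirement is the constant $x$; the SFA allocation depends only on the per-class populations, so the construction of~\cite{walton2009fairness} applies verbatim. The key consequence I need is the following standard corollary of reversibility: in a reversible queueing network driven by independent Poisson class arrivals, the departure process of each class, viewed in forward time, has the same law as the arrival process of that class viewed in reverse time. Since reversibility makes the forward and reverse laws identical, each class's departure process is Poisson at rate $\lambda_{j,x}$.

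To upgrade to the joint statement that the departure processes are \emph{mutually} independent across types, I would use the fact that in a reversible stationary Markov process, the ensemble of class-arrival point processes on $(-\infty,\infty)$ is, by hypothesis, a collection of independent Poisson processes; reversibility then transfers this independence to the ensemble of class-departure processes on $(-\infty,\infty)$. Restricting the departure ensemble to a forward half-line preserves the independent Poisson structure, yielding the claim.

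The main obstacle I expect is verifying that the variable-size, multi-type formulation used in Section~\ref{sec:Model-and-Notation} fits into Walton's reversibility framework cleanly. Two subtleties need care: first, confirming that the SFA rate allocation on types $(j,x)$ can be viewed as SFA on classes whose routes are the underlying $j$'s, so that the product-form and reversibility proof of~\cite{walton2009fairness} goes through; and second, ensuring that treating deterministic per-class service requirements $x$ (rather than exponential) does not break reversibility at the level of the \emph{point processes} of class arrivals and departures. For the latter, I would rely on the insensitivity result of Bonald--Proutière~\cite{bonald2003insensitive} and Zachary~\cite{zachary2007insensitivity}, which guarantees that the stationary distribution (and hence the class-arrival/departure point-process law in stationarity) depends on the service distribution only through its mean, so the reversibility-based output theorem still yields independent Poisson departures of rate $\lambda_{j,x}$ per type.
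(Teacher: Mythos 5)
Your proposal is correct and follows essentially the same route as the paper: the paper also deduces the lemma directly from the reversibility of the SFA bandwidth-sharing network (Proposition 4.2 of Walton~\cite{walton2009fairness}, restated as Proposition~\ref{prop:reversibility}) combined with the fact that each type $(j,x)$ arrives as an independent Poisson process, so that departures in forward time are arrivals of the reversed (identically distributed) process. Your additional care about the refined $(j,x)$ classification with deterministic sizes and the appeal to insensitivity mirrors what the paper does implicitly in the proof of Proposition~\ref{prop:delay_SFA}, so there is no substantive difference.
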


Therefore, Theorem~\ref{thm:delay-bound} is an immediate consequence of Theorems~\ref{thm:congestion}--\ref{thm:scheduling} and Lemma~\ref{lem:poisson}.

\section{Congestion Control: An Adapted SFA Policy \label{sec:congestion_control}}

In this section, we describe an online congestion control policy,
and analyze its performance in terms of explicit bounds on the flow
waiting time, as stated in Theorem \ref{thm:congestion} in Section
\ref{sec:results}. We start by describing the 
store-and-forward allocation policy (SFA) that is introduced in \cite{proutiere_thesis}
and analyzed in generality in \cite{bonald2003insensitive}. We describe
its performance in Section \ref{subsec:SFA}. 
Section \ref{subsec:congestion_control} details our congestion control 
policy, which is an adaptation of the SFA policy.

\subsection{Store-and-Forward Allocation (SFA) Policy\label{subsec:SFA}}

We consider a bandwidth-sharing networking operating in continuous
time. A specific allocation policy of interest is the store-and-forward
allocation (SFA) policy. 

\medskip
\noindent\textbf{Model. }Consider a continuous-time network with a set $\text{\ensuremath{\mathcal{L}}}$
of resources. Suppose that each resource $l\in\mathcal{\mathcal{L}}$
has a capacity $C_{l}>0.$ Let $\mathcal{J}$ be the set of routes.
Suppose that $|\mathcal{L}|=L$ and $|\mathcal{J}|=J$.
Assume that a unit volume of flow on route $j$ consumes an amount
$B_{lj}\geq0$ of resource $l$ for each $l\in\mathcal{L}, $ where
$B_{lj}>0$ if $l\in j$, and $B_{lj}=0$ otherwise. The simplest
case is where $B_{lj}\in\{0,1\},$ i.e., the matrix $B=(B_{lj},\;l\in\mathcal{L},\;j\in\mathcal{J})\in\Real_{+}^{L\times J}$
is a $0$-$1$ incidence matrix. We denote by $\mathcal{\mathcal{K}}$
the set of all resource-route pairs $(l,j)$ such that route $j$
uses resource $l$, i.e., $\mathcal{K}=\left\{ (l,j):\;B_{lj}>0\right\} $;
assume that $|\mathcal{K}|=K.$

Flows arrive to route $j$ as a Poisson process of rate $\lambda_i$. For a flow of size
$x$ on route~$j$ arriving at time $t_{\text{start}},$ its completion
time $t_{\text{end}}$ satisfies 
\[
x=\int_{t_{\text{start}}}^{t_{\text{end}}}c(t)dt,
\]
where $c(t)$ is the bandwidth allocated to this flow on each link
of route $j$ at time $t.$ The size of each flow on route $j$ is
independent with distribution equal to a random variable $X_{j}$
with finite mean. Let $\mu_{j}=(\E[X_{j}])^{-1},$ and $\alpha_{j}=\frac{\lambda_{j}}{\mu_{j}}$
be the traffic intensity on route $j.$

Let $N_{j}(t)$ record the number of flows on route $j$ at time $t,$
and let $\N(t)=(N_{1}(t),N_{2}(t),\ldots,N_{J}(t)).$ Since the service
requirement follows a general distribution, the queue length process
$\N(t)$ itself is not Markovian. Nevertheless, by augmenting the
queue-size vector $\N(t)$ with the residual workloads of the set
of flows on each route, we can construct a Markov description of the
system, denoted by a process $\Z(t).$ On average, $\alpha_{j}$
units of work arrive at route~$j$ per unit time. Therefore, a necessary
condition for the Markov process $\Z(t)$ to be positive recurrent
is that 
\[
\sum_{j:\;l\in j}B_{lj}\alpha_{j}<C_{l},\quad\forall l\in\mathcal{L}.
\]
An arrival rate vector $\blambda=(\lambda_{j}:j\in\mathcal{J})$ is
said to be admissible if it satisfies the above condition. Given
an admissible arrival rate vector $\blambda,$ we can define the load
on resource $l$ as 
\[
g_{l}(\blambda)=\left(\sum_{j:\;j\in l}B_{lj}\alpha_{j}\right)/C_{l}.
\]

We focus on allocation polices such that the total bandwidth allocated
to each route only depends on the current queue-size vector $\n=(n_{j}:\;j\in\mathcal{J})\in\Int_{+}^{J},$ which represents
the number of flows on each route. We consider
a processor-sharing policy, i.e., the total bandwidth $\phi_{j}$
allocated to route $j$ is equally shared between all $n_{j}$ flows.
The bandwidth vector $\mathbf{\bphi}(\n)=(\phi_{1}(\n),\ldots,\phi_{J}(\n))$
must satisfy the capacity constraints
\begin{equation}
\sum_{j:\;j\in l}B_{lj}\phi_{j}(\n)\leq C_{l},\qquad\forall l\in\mathcal{L}.\label{eq:capacity_constraint}
\end{equation}

\smallskip
\noindent\textbf{Store-and-Forward Allocation (SFA) policy. }The SFA policy
was first considered by Massouli\'{e} (see page 63, Section 3.4.1 in~\cite{proutiere_thesis}) and
later analyzed in the thesis of Prouti\`{e}re \cite{proutiere_thesis}.
It was shown by Bonald and Prouti\`{e}re \cite{bonald2003insensitive}
that the stationary distribution induced by this policy has a product
form and is insensitive for phase-type service distributions. Later
Zachary established its insensitivity for general service distributions
\cite{zachary2007insensitivity}. Walton \cite{walton2009fairness}
and Kelly et al.\ \cite{kelly2009resource} discussed the relation
between the SFA policy, the proportionally fair allocation and multiclass
queueing networks. Due to the insensitivity property of SFA, the invariant
measure of the process $\N(t)$ only depends on the parameters $\blambda$
and $\boldsymbol{\mu}.$ 

We introduce some additional notation to describe this policy. Given
the vector $\n=(n_{j}:\;j\in\mathcal{J})\in\Int_{+}^{J},$ which represents
the number of flows on each route, we define the set 
\begin{align*}
U(\n) & =\left\{ \m=(m_{lj}:\;(l,j)\in\mathcal{K})\in\Int_{+}^{K}:\;n_{j}=\sum_{l:\;l\in j}m_{lj},\;\forall j\in\mathcal{J}\right\} .
\end{align*}
With a slight abuse of notation, for each $\m\in\Int_{+}^{K},$ we
define $m_{l}:=\sum_{j:\;l\in j}m_{lj}$ for all $l\in\mathcal{L}.$
We also define quantity
\[
{m_{l} \choose m_{lj}:\;j\ni l}=\frac{m_{l}!}{\prod_{j:\;l\in j}(m_{lj}!)}.
\]
 For $\n\in\Int_{+}^{J},$ let
\[
\Phi(\n)=\sum_{\m\in U(\n)}\prod_{l\in\mathcal{L}}\left({m_{l} \choose m_{lj}:\;j\ni l}\prod_{j:\;l\in j}\left(\frac{B_{lj}}{C_{l}}\right)^{m_{lj}}\right).
\]
 We set $\Phi(\n)=0$ if at least one of the components of $\n$ is
negative. 

The SFA policy assigns rates according to the function $\bphi:\Int_{+}^{J}\rightarrow\Real_{+}^{J},$
such that for any $\n\in\Int_{+}^{J},$ $\bphi(\n)=(\phi_{j}(\n))_{j=1}^{J},$
with 
\[
\phi_{j}(\n) = \frac{\Phi(\n-\e_{j})}{\Phi(\n)},
\]
 where $\e_{j}$ is the $j$-th unit vector in $\Int_{+}^{J}.$

\smallskip

The SFA policy described above has been shown to be feasible, i.e.,
$\bphi$ satisfies condition (\ref{eq:capacity_constraint}) \cite[Corollary 2]{kelly2009resource},
\cite[Lemma 4.1]{walton2009fairness}. Moreover, prior work has established
that the bandwidth-sharing network operating under the SFA policy
has a product-form invariant measure for the number of waiting
flows \cite{bonald2003insensitive,walton2009fairness,kelly2009resource,zachary2007insensitivity}, and the measure is insensitive to the flow size distributions~\cite{zachary2007insensitivity,walton2009fairness}. The above work is summarized in the following theorem~\cite[Theorem 4.1]{shah2014SFA}.
\begin{thm}
\label{thm:SFA} Consider a bandwidth-sharing network operating under
the SFA policy described above. If
\[
\sum_{j:\;l\in j}B_{lj}\alpha_{j}<C_{l},\quad\forall l\in\mathcal{L},
\]
then the Markov process $\Z(t)$ is positive
recurrent, and $\N(t)$ has a unique stationary distribution $\bpi$ given by
\[
\bpi(\n)=\frac{\Phi(\n)}{\Phi}\prod_{j\in\mathcal{J}}\alpha_{j}^{n_{j}},\;\n\in\Int_{+}^{J}, \label{eq:SFA_distribution}
\]
 where 
\[
\Phi=\prod_{l\in\mathcal{L}}\left(\frac{C_{l}}{C_{l}-\sum_{j:\;l\in j}B_{lj}\alpha_{j}}\right).
\]

\end{thm}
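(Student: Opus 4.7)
The plan is to reduce Theorem \ref{thm:SFA} to the classical theory of product-form (Kelly/BCMP) networks through an auxiliary ``token'' construction, and then invoke insensitivity to handle general flow-size distributions. I would first establish the formulas under exponential service times and then remove that assumption.

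First, I would introduce an auxiliary multi-class network. Associate with each flow on route $j$ a collection of virtual tokens, one of class $(l,j)$ at each resource $l\in j$ with service requirement proportional to $B_{lj}$. Each resource $l$ operates processor sharing at total rate $C_{l}$ among all tokens currently present, and a flow is declared complete precisely when all of its tokens have been served. Under exponential flow sizes with mean $1/\mu_{j}$, the micro-state $\m=(m_{lj}:(l,j)\in\mathcal{K})$ is a Kelly-type symmetric network: its stationary distribution is product-form across resources, each resource behaving as a multi-class M/M/1-PS queue with load $\sum_{j:\,l\in j} B_{lj}\alpha_{j}/C_{l}$. Under the hypothesis $\sum_{j:\,l\in j} B_{lj}\alpha_{j} < C_{l}$ for all $l$, each such M/M/1 is stable, so the joint distribution on $\m$ is a proper probability measure with the per-resource normalizers $C_{l}/(C_{l}-\sum_{j:\,l\in j} B_{lj}\alpha_{j})$ multiplying into $\Phi$.

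Second, I would verify that projecting this micro-dynamics onto the coarser state $\n\in\Int_{+}^{J}$ yields exactly the SFA allocation. A direct combinatorial computation, summing the product-form micro-measure over all $\m\in U(\n)$ with the multinomial weights, gives the marginal $\bpi(\n)=(\Phi(\n)/\Phi)\prod_{j}\alpha_{j}^{n_{j}}$ and identifies the aggregate route-$j$ completion rate as the ratio $\Phi(\n-\e_{j})/\Phi(\n)=\phi_{j}(\n)$. Summability of $\bpi$ follows from finiteness of $\Phi$; together with irreducibility (which is immediate from the Poisson arrival coupling), this yields positive recurrence of $\Z(t)$ and uniqueness of the stationary distribution.

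Third, to remove the exponential assumption I would invoke the insensitivity framework of Bonald--Proutière and Zachary. The key check is a partial-balance (Schassberger-type) condition: in the augmented state that records residual workloads, the rate at which mass exits any state via an infinitesimal decrement of the service of one flow balances the symmetric re-entry rate. Because $\phi_{j}(\n)$ depends on the state only through $\n$ and each resource uses processor sharing, this partial balance can be verified first for phase-type distributions by coupling to an enlarged exponential chain, then extended to arbitrary distributions by weak approximation.

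The main obstacle will be the combinatorial identification in the second step, namely showing that $\phi_{j}(\n)=\Phi(\n-\e_{j})/\Phi(\n)$ is precisely the effective route-$j$ throughput induced by the token network. This is a delicate but purely algebraic manipulation: one expands the multinomial coefficients in the definition of $\Phi(\n)$, identifies the contribution of each $(l,j)$-token to the departure rate, and re-indexes the resulting sum to match $\Phi(\n-\e_{j})$ term by term. The insensitivity step, while conceptually distinct, is by now a standard application and can largely be imported from the cited literature.
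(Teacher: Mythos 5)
You should first note that the paper does not actually prove Theorem \ref{thm:SFA}: it imports it, via \cite[Theorem 4.1]{shah2014SFA}, from the literature on SFA and insensitive allocations (Bonald--Prouti\`ere, Zachary, Kelly et al., Walton). Your overall plan --- a product-form multi-class auxiliary network, resummation over $U(\n)$, and insensitivity via partial balance --- is the route taken in those references, so the strategy is sound. However, as written there are two genuine gaps. First, your auxiliary ``token'' network is a fork--join system: each flow places one token at every resource on its route, the tokens are served in parallel, and the flow completes when all tokens finish. Synchronized parallel-service systems of this kind are not Kelly/BCMP quasi-reversible and do not have product-form stationary distributions in general; moreover the construction is inconsistent with the aggregation used in the theorem, since with parallel tokens $\sum_{l:\,l\in j}m_{lj}$ counts tokens (up to $d_j$ per flow) rather than flows, so the identity $n_j=\sum_{l:\,l\in j}m_{lj}$ defining $U(\n)$ fails. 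The object that works is the \emph{sequential} store-and-forward multi-class network in which a route-$j$ customer visits the queues $l\in j$ one at a time (processor sharing at each queue, service scaled by $B_{lj}$); there each customer occupies exactly one queue, $n_j=\sum_{l\in j}m_{lj}$ holds, and quasi-reversibility yields exactly the measure $\tilde{\bpi}(\m)$ that the paper uses in the proof of Proposition \ref{prop:delay_BN}.

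Second, the projection step is the crux and cannot be settled by ``a direct combinatorial computation'' alone. The aggregate $\N(t)$ of the micro network is not Markov, and its dynamics do not coincide pathwise with the SFA bandwidth-sharing dynamics; the assertion that the SFA network's stationary law equals $\sum_{\m\in U(\n)}\tilde{\bpi}(\m)$ is precisely the nontrivial content of the cited works, not a corollary of the product form. A clean self-contained fix in the exponential case is to bypass the dynamic correspondence: verify detailed balance for the SFA Markov chain directly, namely $\bpi(\n)\,\phi_{j}(\n)\,\mu_{j}=\bpi(\n-\e_{j})\,\lambda_{j}$, which is immediate from $\phi_{j}(\n)=\Phi(\n-\e_{j})/\Phi(\n)$ and $\alpha_j=\lambda_j/\mu_j$; the micro-measure (or a direct interchange of sums over $\n$ and $\m$, factorizing per resource into geometric series, convergent exactly when $\sum_{j:\,l\in j}B_{lj}\alpha_{j}<C_{l}$) is then needed only to show $\sum_{\n}\Phi(\n)\prod_{j}\alpha_{j}^{n_{j}}=\prod_{l}C_{l}/(C_{l}-\sum_{j:\,l\in j}B_{lj}\alpha_{j})<\infty$, so that $\bpi$ is a proper distribution. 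Finally, positive recurrence of the full description $\Z(t)$ and insensitivity to the (deterministic) flow sizes in this paper do not follow from summability of $\bpi$ alone; your appeal to the phase-type/weak-limit insensitivity framework is the right move there, and deferring that machinery to Zachary and Bonald--Prouti\`ere is consistent with what the paper itself does.
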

By using Theorem \ref{thm:SFA} (also see \cite[Propositions 4.2 and 4.3]{shah2014SFA}),
an explicit expression can be obtained for $\E[N_{j}],$  the expected number of flows on each route.
\begin{prop}
\label{prop:delay_BN}Consider a bandwidth-sharing network operating
under the SFA policy, with the arrival rate vector $\blambda$ satisfying
$$\sum_{j:\;l\in j}B_{lj}\alpha_j<C_{l},\;\forall l\in\mathcal{L}.$$
For each $l\in\mathcal{L},$ let $g_{l}=\left(\sum_{j:\;l\in j}B_{lj}\alpha_j\right)/C_{l}.$
Then $\N(t)$ has a unique stationary distribution. In particular,
for each $j\in\mathcal{J},$ 
\[
\E[N_{j}]=\sum_{l:\;l\in j}\frac{B_{lj}\alpha_j}{C_{l}}\frac{1}{1-g_{l}}.
\]
\end{prop}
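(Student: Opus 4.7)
The existence and uniqueness of the stationary distribution $\bpi$ follow directly from Theorem~\ref{thm:SFA} under the admissibility hypothesis $\sum_{j:\,l \in j} B_{lj}\alpha_j < C_l$. The remaining task is to evaluate $\E[N_j]=\sum_{\n} n_j \bpi(\n)$ explicitly, and the plan is to exploit the product structure of $\Phi$ by rearranging the double sum into independent per-resource contributions.

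First, I would absorb the $\alpha_j$ factors into the definition of $\Phi(\n)$. Since for each $\m \in U(\n)$ we have $n_j = \sum_{l:\,l\in j} m_{lj}$, the identity
\[
\Phi(\n)\prod_{j\in\mathcal{J}}\alpha_j^{n_j}
= \sum_{\m\in U(\n)}\prod_{l\in\mathcal{L}}\left[\binom{m_l}{m_{lj}:\,j\ni l}\prod_{j:\,l\in j}\left(\frac{B_{lj}\alpha_j}{C_l}\right)^{m_{lj}}\right]
\]
holds. Substituting this into $\E[N_j]$ and interchanging the sums over $\n$ and $\m$: every $\m\in\Int_+^K$ belongs to a unique $U(\n)$ (determined by the relations $n_{j'}=\sum_l m_{lj'}$), so the constraint disappears and the outer sum ranges freely over $\m \in \Int_+^K$, with $n_j$ replaced by $\sum_{l:\,l\in j} m_{lj}$. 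The crucial observation is that the summand now factorizes across $l\in\mathcal{L}$.

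Next, I would analyze each link separately. For each $l$, the multinomial theorem gives
\[
\sum_{(m_{lj})_{j\ni l}}\binom{m_l}{m_{lj}:\,j\ni l}\prod_{j:\,l\in j}\left(\frac{B_{lj}\alpha_j}{C_l}\right)^{m_{lj}}
=\sum_{m_l=0}^{\infty} g_l^{m_l} = \frac{1}{1-g_l},
\]
which confirms $\Phi = \prod_l (1-g_l)^{-1}$. Probabilistically, these per-link weights describe an independent family of random vectors $(M_{lj})_{j\ni l}$ where $M_l := \sum_{j\ni l} M_{lj}$ is geometric with parameter $g_l$ and, conditionally on $M_l=m$, the vector $(M_{lj})_{j\ni l}$ is multinomial with probabilities $p_{lj}=\frac{B_{lj}\alpha_j}{C_l\, g_l}$. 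By the tower property,
\[
\E[M_{lj}] = \E[M_l]\cdot p_{lj} = \frac{g_l}{1-g_l}\cdot\frac{B_{lj}\alpha_j}{C_l\, g_l} = \frac{B_{lj}\alpha_j}{C_l(1-g_l)}.
\]

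Finally, for each tagged link $l_0\in j$, the corresponding ``tagged'' sum $\sum_{\m} m_{l_0 j}(\cdots)$ equals $\E[M_{l_0 j}]$ times the overall normalizer times an extra factor $(1-g_{l_0})^{-1}$ (coming from the tagged marginal). Dividing by $\Phi$ precisely cancels all $(1-g_l)^{-1}$ factors at untagged links, leaving each $l_0\in j$ contributing $\frac{B_{l_0 j}\alpha_j}{C_{l_0}(1-g_{l_0})}$, and summing over $l_0\in j$ yields the claimed formula. The main obstacle is purely bookkeeping --- carefully reindexing the $(\n,\m)$ sums and isolating the tagged-link marginal --- since the product-form structure provided by Theorem~\ref{thm:SFA} already does all the analytic work.
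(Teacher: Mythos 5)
Your argument is correct, but it takes a more self-contained route than the paper. The paper does not expand $\E[N_j]$ directly from the formula for $\bpi$; instead it introduces the auxiliary measure $\tilde{\bpi}$ on $\Int_{+}^{K}$, cites the fact (Kelly/Walton, \cite{kelly2009resource,walton2009fairness}) that $\tilde{\bpi}$ is the stationary distribution of an equivalent multi-class processor-sharing network, reads off $\E_{\tilde{\bpi}}[M_l]=g_l/(1-g_l)$ from the queue-in-isolation property, gets $\E_{\tilde{\bpi}}[M_{lj}]$ by proportional splitting via Theorem 3.10 of \cite{kelly1979reversibility}, and then invokes the aggregation identity $\bpi(\n)=\sum_{\m\in U(\n)}\tilde{\bpi}(\m)$ (Bonald--Prouti\`ere, Walton) to conclude $\E_{\bpi}[N_j]=\sum_{l\in j}\E_{\tilde{\bpi}}[M_{lj}]$. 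You instead derive the same structure by hand: absorbing $\alpha_j^{n_j}$ into $\Phi(\n)$, noting that the sets $U(\n)$ partition $\Int_+^K$ so the $(\n,\m)$ double sum collapses (legitimate by Tonelli, all terms nonnegative), factorizing across links, and computing each link's marginal via the multinomial theorem and the geometric/multinomial decomposition of $(M_{lj})_{j\ni l}$. This is exactly the per-link geometric law with conditional multinomial splitting that the paper obtains by citation, so the underlying object is the same; what your route buys is independence from the external results (no appeal to Kelly's Theorem 3.10 or to the aggregation lemma of Bonald--Prouti\`ere/Walton), at the cost of the combinatorial bookkeeping, while the paper's route is shorter and carries the probabilistic interpretation (each resource behaves as if in isolation with load $g_l$).

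One small imprecision in your last step: the tagged sum is simply $\sum_{\m} m_{l_0 j}\,w(\m)=\Bigl(\prod_{l\neq l_0}\tfrac{1}{1-g_l}\Bigr)\cdot\tfrac{1}{1-g_{l_0}}\,\E[M_{l_0 j}]=\Phi\,\E[M_{l_0 j}]$, so dividing by $\Phi$ leaves $\E[M_{l_0 j}]$ directly; your phrasing ``$\E[M_{l_0 j}]$ times the overall normalizer times an extra factor $(1-g_{l_0})^{-1}$'' double-counts that factor (it is already inside $\E[M_{l_0 j}]$), though the contribution you finally write down, $\tfrac{B_{l_0 j}\alpha_j}{C_{l_0}(1-g_{l_0})}$, is the correct one.
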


\begin{proof}
	Define a measure $\tilde{\bpi}$ on $\Int_{+}^{K}$ as follows \cite[ Proposition 4.2 ]{shah2014SFA}:
	for each $\m\in\Int_{+}^{K},$
	\[
	\tilde{\bpi}(\m)=\frac{1}{\Phi}\prod_{l=1}^{L}\left({m_{l} \choose m_{lj}:\;j\ni l}\prod_{j:\;l\in j}\left(\frac{B_{lj}\alpha_{j}}{C_{l}}\right)^{m_{lj}}\right).
	\]
	It has been shown that $\tilde{\bpi}$ is the stationary distribution
	for a multi-class network with processor sharing queues \cite[Proposition 1]{kelly2009resource}, \cite[Proposition 2.1]{walton2009fairness}.
	Note that the marginal distribution of each queue in equilibrium is
	the same as if the queue were operating in isolation. To be precise,
	for each queue $l,$ the queue size is distributed as if the queue has Poisson
	arrival process of rate $g_{l}.$ Hence the size of queue $l,$ denoted
	by $M_{l},$ satisfies $\E_{\tilde{\bpi}}[M_{l}]=\frac{g_{l}}{1-g_{l}}.$ Let $M_{lj}$
	be the number of class $j$ customers at queue $l.$ Then by Theorem
	3.10 in \cite{kelly1979reversibility}, we have
	\begin{equation}
	\E_{\tilde{\bpi}}[M_{lj}]=\frac{B_{lj}\alpha_{j}}{\sum_{i:\;l\in i}B_{li}\alpha_{i}}\E_{\tilde{\bpi}}[M_{l}]=\frac{B_{lj}\alpha_{j}}{C_{l}}\frac{1}{1-g_{l}}.\label{eq:Exp_Mlj}
	\end{equation}

	Given the above expressions of $\tilde{\bpi}$ and $\bpi$, we can show that $\bpi(\n)=\sum_{\m\in U(\n)}\tilde{\bpi}(\m),$ $\forall\n\in\Int_{+}^{J}$.
	This fact has been shown by Bonald and Prouti\`{e}re \cite{bonald2004performance}
	and Walton \cite{walton2009fairness}. Hence we have 
	\begin{align*}
	\E_{\bpi}[N_{j}] & =\sum_{k=0}^{\infty}k\left(\sum_{\n\in\Int_{+}^{J}:n_{j}=k}\bpi(\n)\right)=\sum_{k=0}^{\infty}k\left(\sum_{\n\in\Int_{+}^{J}:n_{j}=k}\sum_{\m\in U(\n)}\tilde{\bpi}(\m)\right)\\
	& =\sum_{k=0}^{\infty}k\left(\sum_{\m\in\Int_{+}^{K}}\Indicator(\sum_{l:\;l\in j}m_{lj}=k)\tilde{\bpi}(\m)\right)\\
	& =\E_{\tilde{\bpi}} \Big[ \sum_{l:\;l\in j}M_{lj} \Big] \\
	& =\sum_{l:\;l\in j}\frac{B_{lj}\alpha_{j}}{C_{l}}\frac{1}{1-g_{l}}
	\end{align*}
	as desired.
\end{proof}

We now consider the departure processes of the bandwidth-sharing network. It is a known result that the bandwidth sharing network with the SFA policy is reversible, as summarized in the proposition below. This fact has been observed by Prouti\`{e}re~\cite{proutiere_thesis}, 
Bonald and Prouti\`{e}re~\cite{bonald2004performance}, and Walton~\cite{walton2009fairness}. 
Note that Lemma~\ref{lem:poisson} follows immediately from the reversibility property of the network and the fact that flows of each type arrive according to a Poisson process.
\begin{prop}[Proposition 4.2 in~\cite{walton2009fairness}] \label{prop:reversibility} 
Consider a bandwidth sharing network operating under SFA. If $\sum_{j:\;l\in j}B_{lj}\alpha_{j}<C_{l},\;\forall l\in\mathcal{L},$ then the network is reversible.
\end{prop}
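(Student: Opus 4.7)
The plan is to verify detailed balance for the continuous-time Markov chain governing the network, first in the case of exponential service and then extend to arbitrary service distributions via the insensitivity property of SFA. Since a stationary CTMC satisfying detailed balance is reversible, this approach suffices.

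First I would specialize to the case where the size of each route-$j$ flow is exponentially distributed with rate $\mu_j$. In this regime $\N(t)$ is itself a CTMC on $\Int_+^J$ with upward rates $q(\n,\n+\e_j)=\lambda_j$ and downward rates $q(\n+\e_j,\n)=\mu_j\,\phi_j(\n+\e_j)$; the latter holds because under processor sharing the aggregate service completion rate on route $j$ equals $n_j\cdot\mu_j\cdot\phi_j(\n)/n_j=\mu_j\phi_j(\n)$. Using the product form $\bpi$ of Theorem~\ref{thm:SFA} together with the defining identity $\phi_j(\n+\e_j)=\Phi(\n)/\Phi(\n+\e_j)$, a short calculation gives
\[
\bpi(\n+\e_j)\,\mu_j\,\phi_j(\n+\e_j)\;=\;\alpha_j\,\frac{\Phi(\n+\e_j)}{\Phi(\n)}\,\bpi(\n)\,\mu_j\,\frac{\Phi(\n)}{\Phi(\n+\e_j)}\;=\;\bpi(\n)\,\alpha_j\mu_j\;=\;\bpi(\n)\,\lambda_j,
\]
which is exactly the detailed balance equation for the transition $\n\leftrightarrow\n+\e_j$. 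Hence $\N(t)$ is reversible in the exponential case.

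To extend to arbitrary service distributions with finite mean, I would invoke two facts already cited in the paper. Zachary's insensitivity theorem~\cite{zachary2007insensitivity} shows that the stationary distribution of $\N(t)$ is unchanged once the service mean is fixed. For phase-type service distributions the system admits a finite-dimensional Markov description (tracking the phase of each in-service flow) whose stationary measure is product form with the same coordinate-wise structure as the measure $\tilde{\bpi}$ appearing in the proof of Proposition~\ref{prop:delay_BN}; the same detailed-balance computation applied to this richer chain gives its reversibility, and reversibility then projects down to $\N$-coordinates by summing over the fibers of the projection. Since phase-type distributions are dense in the class of distributions on $\Real_+$ with finite mean, a limiting argument delivers reversibility for general service distributions.

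The main obstacle is precisely this last step, since $\N(t)$ is no longer Markov under arbitrary service distributions and reversibility must be formulated against an appropriate extended Markov description (e.g.\ the residual-workload process $\Z(t)$). The cleanest formalization, carried out by Prouti\`{e}re~\cite{proutiere_thesis} and Walton~\cite{walton2009fairness}, is to identify the SFA bandwidth-sharing network with a multi-class Whittle/BCMP-type network on the $\m$-coordinate state space, whose reversibility follows from Kelly's classical framework and transfers to the coarser $\N$-process; I would appeal to this identification to conclude the proposition.
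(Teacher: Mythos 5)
The paper does not prove this statement at all: it is imported verbatim as a known result (Proposition 4.2 of Walton~\cite{walton2009fairness}, with the observation also attributed to Prouti\`{e}re~\cite{proutiere_thesis} and Bonald--Prouti\`{e}re~\cite{bonald2004performance}), so any comparison is between your argument and a pure citation. Your core computation is correct and is indeed the heart of why the result holds: in the exponential case $\N(t)$ is a Markov chain whose only transitions are $\n\to\n\pm\e_j$, the rates $\lambda_j$ and $\mu_j\phi_j(\n)$ are right, and the identity $\phi_j(\n+\e_j)=\Phi(\n)/\Phi(\n+\e_j)$ makes detailed balance against $\bpi$ a one-line check, giving reversibility of the stationary chain. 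This is more than the paper offers, and it buys a self-contained verification in the Markovian case; the paper's citation buys generality (arbitrary service distributions, which matters here since flow sizes of a given type are deterministic) without any work.

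The one step you should not present as routine is the extension beyond exponential service. The claim that ``the same detailed-balance computation applied to this richer chain gives its reversibility'' is not justified as stated: for a phase-type augmentation the chain also has internal phase-transition moves, and insensitive/quasi-reversible networks of this kind are established via \emph{partial} (station) balance, not full detailed balance of the augmented chain; whether the augmented chain is reversible depends on how the phase representation is set up. Likewise, passing reversibility through a phase-type limit requires an argument about convergence of the stationary processes, not just of the stationary distributions supplied by Zachary's insensitivity theorem. Your projection remark is fine (a measurable function of a reversible stationary process is reversible), but the reversibility of the extended description is exactly the nontrivial content of Walton's Proposition 4.2. Since you ultimately defer to the Prouti\`{e}re/Walton identification for this, your proposal is acceptable and ends where the paper starts --- with the citation --- but the intermediate phase-type paragraph should either be made precise (partial balance, explicit reversed dynamics) or dropped in favor of the citation.
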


\subsection{An SFA-based Congestion Control Scheme\label{subsec:congestion_control}}

We now describe a congestion control scheme for a general acyclic
network. As mentioned earlier, every \emph{exogenous} flow is pushed to the corresponding external buffer upon arrival. The internal buffers store flows that are either transmitted from other nodes or
moved from the external buffers. We want to emphasize that
only packets present in the internal queues are eligible
for scheduling. The congestion control mechanism determines the time
at which each external flow should be admitted into the network for
transmission, i.e., when a flow is removed from the external buffer
and pushed into the respective source internal buffer.  

The key idea of our congestion control policy is as follows. Consider
the network model described in Section \ref{sec:Model-and-Notation},
denoted by $\mathcal{N}$. Let $A$ be an $N\times J$ matrix where
\[
A_{nj}=\begin{cases}
1, & \text{if route \ensuremath{j} passes through node \ensuremath{n}}\\
0, & \text{otherwise.}
\end{cases}
\]
The corresponding admissible region $\Lambda$ given by (\ref{eq:capacity})
can be equivalently written as 
\[
\Lambda=\left\{ \blambda\in\Real_{+}^{J|\mathcal{X}|}:\;A\balpha<\boldsymbol{1}\right\} ,
\]
 where \emph{$\balpha=(\alpha_{j},\;j\in\mathcal{J})$ }with\emph{
$\alpha_{j}:=\sum_{x\in\mathcal{X}}\;x\lambda_{j,x},$ }and $\boldsymbol{1}$
is an $N$-dim vector with all elements equal to one. 

Now consider a virtual bandwidth-sharing network, denoted by \textbf{$\mathcal{N}_{B}$},\textbf{
}with $J$ routes which have a one-to-one correspondence with the
$J$ routes in $\mathcal{N}$. The network $\mathcal{N}_{B}$ has
$N$ resources, each with unit capacity. The resource-route relation
is determined precisely by the matrix $A.$ Observe that each resource
in $\mathcal{N}_{B}$ corresponds to a node in $\mathcal{N}.$ Assume
that the exogenous arrival traffic of \textbf{$\mathcal{N}_{B}$ }is
identical to that of $\mathcal{N}$. That is, each flow arrives at
the same route of $\mathcal{N}$\textbf{ }and\textbf{ $\mathcal{N}_{B}$
}at the same time. \textbf{$\mathcal{N}_{B}$ }will operate under
the insensitive SFA policy described in Section \ref{subsec:SFA}. 

\smallskip

\noindent\textbf{Emulation scheme} $E$. Flows arriving at $\mathcal{N}$ will
be admitted into the network for scheduling in the following way.
For a flow $f$ arriving at its source node $v_{s}$ at time $t_{f}$,
let $\delta_{f}$ denote its departure time from \textbf{$\mathcal{N}_{B}$}.
In the network $\mathcal{N},$ this flow will be moved from the external buffer to the internal queue of node $v_{s}$ at time $\delta_{f}.$
Then all packets of this flow will simultaneously become eligible
for scheduling. Conceptually, all flows will be first fed into the
virtual network \textbf{$\mathcal{N}_{B}$ }before being admitted
to the internal network for transmission. 

\smallskip
This completes the description of the congestion control policy. Observe that the centralized controller simulates the virtual network $\mathcal{N}_{B}$ with the SFA policy in parallel, and tracks the departure processes of flows in $\mathcal{N}_{B}$ to make congestion control decisions for $\mathcal{N}$. According to the above emulation scheme $E$, we have the following lemma.
\begin{lem}
\label{lem:delay_flow_congestion}Let $D_{B}$ denote the delay of
a flow $f$ in $\mathcal{N}_{B},$ and $D_{W}$ be the amount of time
the flow spends at the external buffer in $\mathcal{N}$. Then
$D_{W}=D_{B}$ surely. 
\end{lem}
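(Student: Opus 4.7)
The plan is to unpack the definitions provided in the emulation scheme $E$ and observe that the identity $D_W = D_B$ is forced by construction, sample-path by sample-path. Since the claim is a sure (not just in-distribution) equality, the argument should avoid any stochastic machinery and proceed purely by tracking the timestamps of an arbitrary flow $f$ through both networks.

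First, I would fix a sample realization of the external arrival process and a tagged flow $f$ that arrives at its source $v_s$ at time $t_f$. By the specification of $\mathcal{N}_B$ in the emulation scheme, the virtual bandwidth-sharing network is driven by exactly the same arrival sequence as $\mathcal{N}$, so $f$ arrives in $\mathcal{N}_B$ at the same time $t_f$. Let $\delta_f$ denote the time at which $f$ departs from $\mathcal{N}_B$ under the SFA policy; this quantity is determined solely by the arrivals and the deterministic SFA rate-allocation rule, and is therefore well defined on the sample path. By the definition of the sojourn time in $\mathcal{N}_B$, we have $D_B = \delta_f - t_f$.

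Second, I would invoke the rule of scheme $E$ in the physical network $\mathcal{N}$: flow $f$ is placed in the external buffer of $v_s$ at time $t_f$, and the congestion controller removes it and pushes it into the internal source queue of $v_s$ exactly at time $\delta_f$. Hence the amount of time $f$ resides in the external buffer of $\mathcal{N}$ is also $\delta_f - t_f$, which is precisely $D_W$. Combining the two expressions yields $D_W = D_B$ on every sample path, i.e., surely.

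There is essentially no technical obstacle here; the content of the lemma is entirely bookkeeping that records the defining property of the emulation scheme $E$. The only point that deserves a brief sentence in a written proof is that $\delta_f$ is unambiguously defined on a per-sample-path basis (so that the two delays refer to the same quantity), which is immediate since the SFA allocation and the arrival stream together determine the trajectory of $\mathcal{N}_B$ pathwise.
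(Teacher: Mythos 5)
Your proposal is correct and matches the paper's reasoning: the paper offers no separate proof, presenting the lemma as an immediate consequence of the emulation scheme $E$, which is exactly the sample-path bookkeeping you spell out ($D_W = \delta_f - t_f = D_B$ by construction). Nothing is missing.
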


For each flow type $(j,x)\in\mathcal{T},$ let $D_{B}^{(j,x)} (\blambda)$ denote
the sample mean of the delays over all type-$(j,x)$ flows in the bandwidth
sharing network $\mathcal{N}_{B}.$ That is, $D_{B}^{(j,x)} (\blambda) := \limsup_{k\rightarrow\infty}\frac{1}{k}\sum_{i=1}^{k}D_{B}^{(j,x),i} (\blambda),$
where $D_{B}^{(j,x),i} (\blambda)$ is the delay of the $i$-th type-$(j,x)$
flow. From Theorem~\ref{thm:SFA} and Proposition~\ref{prop:delay_BN}, we readily deduce the following
result.

\begin{prop}
\label{prop:delay_SFA}In the network $\mathcal{\mathcal{N}}_{B},$
we have 

\[
D_{B}^{(j,x)} (\blambda)=\sum_{v:v\in j}\frac{x}{1-f_{v}},\quad\text{with probability \ensuremath{1.}}
\]
\end{prop}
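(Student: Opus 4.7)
The plan is to combine the product-form stationary distribution from Theorem~\ref{thm:SFA} and the explicit expression for $\E[N_{j}]$ in Proposition~\ref{prop:delay_BN} with a linearity-in-size property of SFA processor-sharing dynamics, and then pass to the almost-sure statement via ergodicity.

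First I would specialize Proposition~\ref{prop:delay_BN} to $\mathcal{N}_{B}$. By construction every resource of $\mathcal{N}_{B}$ has unit capacity $C_{v}=1$ and the incidence matrix is $B_{vj}=1$ if $v\in j$ and $0$ otherwise, so $g_{v}(\blambda)=\sum_{j\ni v}\alpha_{j}=f_{v}$ and the proposition immediately yields
\[
\E[N_{j}]\;=\;\alpha_{j}\sum_{v\in j}\frac{1}{1-f_{v}}.
\]
This is an aggregate statement for route $j$; the remaining work is to disaggregate it into the per-type expectation $\E[D_{B}^{(j,x)}]$.

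Next I would establish that, under SFA, the expected sojourn time of a flow is \emph{linear in its size}: $\E[D_{B}^{(j,x)}]=c_{j}\,x$ for some constant $c_{j}$ depending only on the route $j$. The argument rests on two features of SFA: (i) at any instant every flow on route $j$ receives the same instantaneous bandwidth $\phi_{j}(\N(t))/N_{j}(t)$, independently of its remaining size, and (ii) the insensitivity result of Zachary~\cite{zachary2007insensitivity}, which lets us replace the deterministic sizes of type-$(j,x)$ flows by exponential random variables of the same mean $x$ without changing the joint stationary law of $\N(t)$. In the resulting symmetric-queue network the classical conditional sojourn-time identity for processor sharing gives $\E[\text{sojourn}\mid \text{size}=x]=c_{j}x$, and reversibility of $\mathcal{N}_{B}$ (Proposition~\ref{prop:reversibility}) ensures this conditional expectation transfers back to the original deterministic-size model. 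Once linearity is granted, Little's law applied route by route pins down $c_{j}$:
\[
\alpha_{j}\sum_{v\in j}\frac{1}{1-f_{v}}\;=\;\E[N_{j}]\;=\;\sum_{x\in\mathcal{X}}\lambda_{j,x}\E[D_{B}^{(j,x)}]\;=\;c_{j}\sum_{x\in\mathcal{X}}x\lambda_{j,x}\;=\;c_{j}\alpha_{j},
\]
so $c_{j}=\sum_{v\in j}\frac{1}{1-f_{v}}$ and therefore $\E[D_{B}^{(j,x)}]=x\sum_{v\in j}\frac{1}{1-f_{v}}$.

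Finally, I would upgrade this expected-value statement to the sample-mean statement claimed in the proposition. By Theorem~\ref{thm:SFA} the Markov description $\Z(t)$ is positive recurrent, hence possesses a unique ergodic stationary distribution; combining this with a renewal-reward/SLLN argument applied to the sequence of sojourn times of successive type-$(j,x)$ arrivals (whose long-run rate is the positive constant $\lambda_{j,x}$) yields $\frac{1}{k}\sum_{i=1}^{k}D_{B}^{(j,x),i}(\blambda)\to\E[D_{B}^{(j,x)}]$ almost surely, which is the desired identity. The main obstacle is the linearity-in-size step: although it is a well-known consequence of insensitivity for a single M/G/1-PS queue, transporting it to the bandwidth-sharing network $\mathcal{N}_{B}$ requires invoking both the product-form structure of Theorem~\ref{thm:SFA} and the reversibility of $\mathcal{N}_{B}$; once this ingredient is in place the remaining computation is purely algebraic.
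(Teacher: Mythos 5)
Your overall skeleton (route-level mean queue length, a linearity-in-size claim, Little's law to pin the constant, ergodicity for the almost-sure statement) would give the right formula, but the central step --- $\E[D_{B}^{(j,x)}]=c_{j}x$ with $c_{j}$ independent of $x$ --- is not actually proved by the tools you invoke, and this is a genuine gap. Insensitivity (Zachary, Bonald--Prouti\`ere) asserts that the \emph{stationary law of the flow-count vector} $\N(t)$ is unchanged when service distributions are swapped at fixed means; a conditional mean sojourn time given the realized size is not a functional of that stationary law, so replacing deterministic sizes by exponentials does not automatically preserve it. Likewise, Proposition \ref{prop:reversibility} (reversibility) concerns the time-reversal of the stationary network and says nothing about transferring size-conditional sojourn expectations between the exponential and deterministic models; the sentence ``reversibility ensures this conditional expectation transfers back'' is not a valid inference. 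Finally, the ``classical conditional sojourn-time identity'' $\E[T\mid x]=x/(1-\rho)$ is a single M/G/1-PS fact; $\mathcal{N}_{B}$ under SFA is a bandwidth-sharing (Whittle-type) network with state-dependent per-route rates, and extending that identity to it is essentially the content of what needs to be proved, not an off-the-shelf citation.

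The paper sidesteps all of this by refining the classification: each pair $(j,x)$ is treated as its own class in the bandwidth-sharing network (legitimate because $\mathcal{X}$ is countable and classes $(j,x)$, $(j,x')$ have identical resource profiles, so Theorem \ref{thm:SFA} and the argument of Proposition \ref{prop:delay_BN} apply verbatim to the refined network). This yields directly
\[
\E[N_{j,x}]=\sum_{v:\,v\in j}\frac{x\lambda_{j,x}}{1-f_{v}},
\]
and Little's law applied \emph{per type} $(j,x)$ gives $\E[\bar{D}_{B}^{(j,x)}]=\sum_{v\in j}\frac{x}{1-f_{v}}$ with no linearity lemma needed; ergodicity then upgrades this to the sample-mean statement, exactly as in your last paragraph. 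If you want to keep your structure, the fix is to replace your insensitivity/reversibility argument for linearity with this per-type refinement (or, equivalently, prove linearity via the refined product form); your route-level Little's law computation then becomes a consistency check rather than the mechanism that determines $c_{j}$.
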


\begin{proof}
 Recall that in the network $\mathcal{N}_{B},$ flows arriving on each route are
further classified by the size. Each type-$(j,x)$ flow
requests $x$ amount of service, deterministically. It is not difficult to see that properties of the bandwidth-sharing network stated in Section~\ref{subsec:SFA} still hold with the refined classification. We denote by $N_{j,x}(t)$
the number of type-$(j,x)$ flows at time $t.$ A Markovian description
of the system is given by a process $\Z(t)$, whose state contains the queue-size
vector $\N(t)=(N_{j,x}(t),j\in\mathcal{J},x\in\mathcal{X})$ and the residual workloads of the set of flows
on each route. 

By construction, each resource $v$ in the network $\mathcal{N}_{B}$ corresponds to
a node $v$ in the network $\mathcal{N},$ with unit capacity. The load $ g_v $ on the resource
$v$ of $ \mathcal{N}_{B} $ and the load $ f_v $ on node $ v $ of $ \mathcal{N} $ satisfy the relationship 
\[
g_{v}=\sum_{v:v\in j}A_{vj}\alpha_{j}=f_{v}.
\]
As $\blambda\in\Lambda,$ $f_{v}<1$ for each $v\in\mathcal{V}.$
By Theorem \ref{thm:SFA}, the Markov process $\Z(t)$ is positive
recurrent, and $\N(t)$ has a unique stationary distribution. Let $N_{j,x}$
denote the number of type-$(j,x)$ flow in the bandwidth sharing network
$\mathcal{N}_{B}$, in equilibrium. Following the same argument as
in the proof of Proposition \ref{prop:delay_BN}, we have 
\[
\E[N_{j,x}]=\sum_{v:\;v\in j}\frac{x\lambda_{j,x}}{1-f_{v}}.
\]

Consider the delay of a type-$(j,x)$ flow in equilibrium, denoted
by $\bar{D}_{B}^{(j,x)} (\blambda)$. By applying Little's Law to the stationary
system concerning only type-$(j,x)$ flows, we can obtain 
\[
\E[\bar{D}_{B}^{(j,x)} (\blambda)]=\frac{\E[N_{j,x}]}{\lambda_{j,x}}=\sum_{v:\;v\in j}\frac{x}{1-f_{v}}.
\]

By the ergodicity of the network $\mathcal{N}_{B},$ we have
\[
D_{B}^{(j,x) (\blambda)}=\E[\bar{D}_{B}^{(j,x)} (\blambda)],\quad\text{with probability}\;1,
\]
thereby completing the proof of Proposition~\ref{prop:delay_SFA}.
\end{proof}

Equipped with Proposition~\ref{prop:delay_SFA} and Lemma \ref{lem:delay_flow_congestion}, we are now ready to prove Theorem \ref{thm:congestion}.
\begin{proof}[Proof of Theorem \ref{thm:congestion}] 

By the definition, we have $\frac{1}{1-f_{v}}\leq\frac{1}{1-\rho_{j}(\blambda)}$,
for each $v\in j.$ Proposition \ref{prop:delay_SFA} implies that, with probability 1,
\[
D_{B}^{(j,x)} (\blambda)=\sum_{v:v\in j}\frac{x}{1-f_{v}}\leq\frac{xd_{j}}{1-\rho_{j}(\blambda)},
\]
It follows from Lemma \ref{lem:delay_flow_congestion} that $D_{W}^{(j,x)} (\blambda) = D_{B}^{(j,x)} (\blambda).$
Therefore, we have
\[
D_{W}^{(j,x)} (\blambda) \leq\frac{xd_{j}}{1-\rho_{j}(\blambda)},\quad\text{with probability}\;1.
\] 
This completes the proof of Theorem~\ref{thm:congestion}.
\end{proof}

\section{Scheduling: Emulating LCFS-PR\label{sec:scheduling}}

In this section, we design a scheduling algorithm that achieves a
delay bound of 
$$O\Big(\# \text{hops} \times  \text{flow-size} \times \frac{1}{\text{gap-to-capacity}}\Big)$$ 
 without throughput loss for a multi-class queueing network operating in discrete
 time. As mentioned earlier, our design consists of three key steps.
We first discuss how to choose the granularity of time-slot $\epsilon$
appropriately, in order to avoid throughput loss in the discrete-time
network. In terms of the delay bound, we leverage the results from
a continuous-time network operating under the Last-Come-First-Serve
Preemptive-Resume (LCFS-PR) policy. With a Poisson arrival process,
the network becomes quasi-reversible with a product-form stationary
distribution. Then we adapt the LCFS-PR policy to obtain a scheduling
scheme for the discrete-time network. In particular, we design an emulation
scheme such that the delay of a flow in the discrete-time network
is bounded above by the delay of the corresponding flow in the continuous-time
network operating under LCFS-PR. This establishes the delay property
of the discrete-time network.

\subsection{Granularity of Discretization }

We start with the following definitions that will be useful going forward.
\begin{defn}[$\mathcal{N}_{D}^{(\epsilon)}$ network]
It is a discrete-time network with the topology, service requirements and
exogenous arrivals as described in Section \ref{sec:Model-and-Notation}.
In particular, each time slot is of length $\epsilon$. Additionally,
the arrivals of size-$x$ flows on route $j$ form a Poisson process
of rate $\lambda_{j,x}.$ For such a type-$(j,x)$ flow, its size in 
$\mathcal{N}_{D}^{(\epsilon)}$
becomes
\[
x^{(\epsilon)}:=\epsilon\left\lceil \frac{x}{\epsilon}\right\rceil ,
\]
and the flow is decomposed into $\lceil\frac{x}{\epsilon}\rceil$
packets of equal size $\epsilon$. 
\end{defn}
\begin{defn}[$\mathcal{N}_{C}^{(\epsilon)}$  network]
It is a continuous-time network with the same topology, service requirements
and exogenous arrivals as $\mathcal{N}_{D}^{(\epsilon)}.$ The size
of a type-$(j,x)$ flow is modified in the same way as that in $\mathcal{N}_{D}^{(\epsilon)}$. 
\end{defn}

Consider the network $\mathcal{N}_{D}^{(\epsilon)}.$ Given an arrival
rate vector $\blambda\in\Lambda,$ the load on node $v$ is
\begin{align*}
f_{v}^{(\epsilon)}= & \sum_{j:j\in v}\sum_{x:x\in\mathcal{X}}\epsilon\left\lceil \frac{x}{\epsilon}\right\rceil \lambda_{j,x}.
\end{align*}
As each node has a unit capacity, $\blambda$ is admissible in $\mathcal{N}_{D}^{(\epsilon)}$
only if $f_{v}^{(\epsilon)}<1$ for all $v\in\mathcal{V}$. We let 
\begin{align}
\epsilon & = \min \left\{
	\frac{1}{C_{0}}\cdot\min_{v\in\mathcal{V}}\left\{ \frac{1-f_{v}}{\sum_{j:j\in v}\sum_{x:x\in\mathcal{X}}\lambda_{j,x}}\right\},
	\;
	\min_{j\in \mathcal{J}} \left\{ 1- \rho_j(\blambda) \right\} 
\right\},
\label{eq:epsilon}
\end{align}
where $C_{0}>1$ is an arbitrary constant. Then for each $v\in\mathcal{V},$
\begin{align*}
f_{v}^{(\epsilon)} 
& < \sum_{j:j\in v}\sum_{x:x\in\mathcal{X}}(x+\epsilon)\lambda_{j,x} \\
&=  f_{v} + \epsilon \sum_{j:j\in v}\sum_{x:x\in\mathcal{X}} \lambda_{j,x} \\
&\overset{(a)}{\leq}f_{v}+\frac{1-f_{v}}{C_{0}},\label{eq:node_load_Nd}
\end{align*}
where the inequality (a) holds because $\epsilon \le \frac{1-f_{v}}{C_{0}} \cdot \frac{1}{\sum_{j:j\in v}\sum_{x:x\in\mathcal{X}}\lambda_{j,x}} $ by the definition of $\epsilon$
in Eq.\ (\ref{eq:epsilon}). Since $\blambda\in\Lambda,$ for each
$v\in\mathcal{V},$ $f_{v}<1$. We thus have 
\begin{equation}
1-f_{v}^{(\epsilon)}\geq\frac{C_{0}-1}{C_{0}}(1-f_{v})>0.\label{eq:node_load_Nd_gap}
\end{equation}
Thus, each $\blambda\in\Lambda$ is admissible in the discrete-time
network $\mathcal{N}_{D}^{(\epsilon)}$. 

\subsection{Property of $\mathcal{N}_{C}^{(\epsilon)}$\label{subsec:Nc}}

Consider the continuous-time open network $\mathcal{N}_{C}^{(\epsilon)}$,
where a LCFS-PR scheduling policy is used at each node. From Theorems
$3.7$ and $3.8$ in~\cite{kelly1979reversibility}, the network $\mathcal{N}_{C}^{(\epsilon)}$ has a product-form queue length distribution in equilibrium, if the
following conditions are satisfied: 
\begin{enumerate}
\item[(C1.)] the service time distribution is either phase-type or the limit of
a sequence of phase-type distributions; 
\item[(C2.)] the total traffic at each node is less than its capacity. 
\end{enumerate}
Note that the sum of $n$ exponential random variables each with mean
$\frac{1}{nx}$ has a phase-type distribution and converges in distribution
to a constant $x,$ as $ n $ approaches infinity. Thus the first condition
is satisfied. For each $\blambda\in\Lambda,$ the second condition
holds for $\mathcal{N}_{C}^{(\epsilon)}$ with $\epsilon$ defined
as Eq.\ (\ref{eq:epsilon}).

In the following theorem, we establish a bound for the delay
experience by each flow type. For $j\in\mathcal{J},x\in\mathcal{X},$ let $D^{(j,x),\epsilon} (\blambda)$ denote the sample
mean of the delay over all type-$(j,x)$ flows, i.e.,
\begin{align*}
D^{(j,x),\epsilon} (\blambda) & =\limsup_{k\rightarrow\infty}\frac{1}{k}\sum_{i=1}^{k}D^{(j,x),\epsilon,i} (\blambda),
\end{align*}
where $D^{(j,x),\epsilon,i}$ is the delay of the $i$-th type-$(j,x)$ flow.

\begin{thm}
\label{thm:Delay_Nc} 
In the network $\mathcal{N}_{C}^{(\epsilon)},$ we have
\[
{D}^{(j,x),\epsilon} (\blambda) =\sum_{v:v\in j}\frac{\xe}{1-f_{v}^{(\epsilon)}},\quad \mbox{with probability } 1.
\]
\end{thm}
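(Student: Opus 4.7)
The plan is to mimic the argument used for Proposition~\ref{prop:delay_SFA}: establish a product-form stationary distribution for $\mathcal{N}_{C}^{(\epsilon)}$ via quasi-reversibility, extract the expected number of type-$(j,x)$ flows in equilibrium, apply Little's law on each flow type, and conclude by ergodicity that the sample-mean delay coincides almost surely with its stationary expectation.

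First I would verify that the hypotheses of the BCMP/quasi-reversibility theorem apply. Conditions (C1) and (C2) are already checked in Section~\ref{subsec:Nc}: each deterministic service requirement $x^{(\epsilon)}$ is approximated by a sequence of phase-type (Erlang) distributions converging to $x^{(\epsilon)}$ in distribution, and the choice of $\epsilon$ in Eq.~(\ref{eq:epsilon}) guarantees $f_v^{(\epsilon)}<1$ for every $v\in\mathcal{V}$. Consequently, by Theorems 3.7 and 3.8 of~\cite{kelly1979reversibility}, the Markov description of $\mathcal{N}_{C}^{(\epsilon)}$ under LCFS-PR is positive recurrent and the joint distribution of the queue contents factorizes across nodes in product form, with each node's marginal law equal to that of a single LCFS-PR queue with Poisson arrivals and the same total load.

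Next I would compute the expected number of type-$(j,x)$ flows at each node $v\in j$ in equilibrium. The marginal number of flows present at node $v$ is geometric with mean $f_v^{(\epsilon)}/(1-f_v^{(\epsilon)})$. Since the LCFS-PR discipline preserves the same class-decomposition property used in the proof of Proposition~\ref{prop:delay_BN} (each class contributes to a node in proportion to its contribution to the load at that node; cf.\ Theorem~3.10 of~\cite{kelly1979reversibility}), the expected number $M_{v,(j,x)}$ of type-$(j,x)$ flows at node $v$ satisfies
\[
\E[M_{v,(j,x)}]=\frac{x^{(\epsilon)}\lambda_{j,x}}{f_v^{(\epsilon)}}\cdot\frac{f_v^{(\epsilon)}}{1-f_v^{(\epsilon)}}=\frac{x^{(\epsilon)}\lambda_{j,x}}{1-f_v^{(\epsilon)}}.
\]
Summing over $v\in j$ yields $\E[N_{j,x}]=\sum_{v\in j}\frac{x^{(\epsilon)}\lambda_{j,x}}{1-f_v^{(\epsilon)}}$ for the total number of type-$(j,x)$ flows in the network in equilibrium.

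Finally, I would apply Little's law restricted to the subsystem consisting of type-$(j,x)$ flows (whose arrivals form a Poisson process of rate $\lambda_{j,x}$) to obtain the stationary expected end-to-end delay $\E[\bar D^{(j,x),\epsilon}(\blambda)]=\E[N_{j,x}]/\lambda_{j,x}=\sum_{v\in j}\frac{x^{(\epsilon)}}{1-f_v^{(\epsilon)}}$. The network $\mathcal{N}_{C}^{(\epsilon)}$ is ergodic since its Markov description is positive recurrent, so the sample-mean delay converges almost surely to this stationary expectation, giving the stated identity. The main obstacle is the second step: justifying that the class-level decomposition of the marginal node distribution holds under deterministic (or phase-type approximated) service times together with the LCFS-PR discipline, which is what allows the translation from the aggregate product-form into a per-type mean formula; once this is in hand, the rest is a direct application of Little's law and ergodicity.
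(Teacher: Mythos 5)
Your proposal is correct and follows essentially the same route as the paper: invoke Kelly's quasi-reversibility results (phase-type approximation of the deterministic service times, product form, per-class decomposition via Theorem 3.10 of \cite{kelly1979reversibility}), compute $\E[Q_v^{(j,x)}]$, apply Little's law, and conclude by ergodicity. The only cosmetic difference is that you sum the per-node expected counts and apply Little's law once to the whole route, whereas the paper applies Little's law node by node and sums the per-node delays; the two are equivalent.
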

\begin{proof}

The network $\mathcal{N}_{C}^{(\epsilon)}$ described above is an
open network with Poisson exogenous arrival processes. An LCFS-PR queue
management is used at each node in $\Nc.$ The service requirement
of each flow at each node is deterministic with bounded support. As
shown in \cite{kelly1979reversibility} (see Theorem 3.10 of Chapter 3), the network
$\Nc$ is an open network of quasi-reversible queues. Therefore, the
queue size distribution of $\Nc$ has a product form in equilibrium.
Further, the marginal distribution of each queue in equilibrium is
the same as if the queue is operating in isolation. Consider a node
$v.$ We denote by $Q_{v}$ the queue size of node $v$ in equilibrium.
In isolation, $Q_{v}$ is distributed as if the queue has a Poisson
arrival process of rate $f_{v}^{(\epsilon)}.$ Theorem $3.10$ implies
that the queue is quasi-reversible and hence it has a distribution such
that 
\[
\E[Q_{v}]=\frac{f_{v}^{(\epsilon)}}{1-f_{v}^{(\epsilon)}}.
\]

Let $Q_{v}^{(j,x)}$ denote the number of type-$(j,x)$ flows at node
$v$ in equilibrium. Note that $\xe\lambda_{j,x}$ is the average
amount of service requirement for type-$(j,x)$ flows at node $v$
per unit time. Then by Theorem $3.10$ \cite{kelly1979reversibility}, we have 
\[
\E[Q_{v}^{(j,x)}]=\frac{\xe\lambda_{j,x}}{f_{v}^{(\epsilon)}}\E[Q_{v}]. 
\]

Let $\bar{D}_{v}^{(j,x),\epsilon} (\blambda)$ denote the delay experienced by a type-$(j,x)$ flow at node $v,$ in equilibrium.
Applying Little's Law to the stable system concerning only type-$(j,x)$
flows at node $v$, we can obtain 
\[
\E[\bar{D}_{v}^{(j,x),\epsilon} (\blambda)]=\frac{\E[Q_{v}^{(j,x)}]}{\lambda_{j,x}}=\frac{\xe}{1-f_{v}^{(\epsilon)}}.
\]

Therefore, the delay experienced by a flow of size $x$ along the entire route~$j,$ $\bar{D}^{(j,x),\epsilon}(\blambda),$ satisfies
\[
\E[\bar{D}^{(j,x),\epsilon} (\blambda)]=\sum_{v:v\in j}\E[\bar{D}_{v}^{(j,x),\epsilon} (\blambda)]=\sum_{v:v\in j}\frac{\xe}{1-f_{v}^{(\epsilon)}}.
\]

By the ergodicity of the network $\Nc,$ with probability 1, 
\[
{D}^{(j,x),\epsilon} (\blambda) =\E[\bar{D}^{(j,x),\epsilon} (\blambda)]=\sum_{v:v\in j}\frac{\xe}{1-f_{v}^{(\epsilon)}}.
\]

\end{proof}

\subsection{Scheduling Scheme for $\mathcal{N}_{D}^{(\epsilon)}$\label{subsec:Nd}}

In the discrete-time networks, each flow is packetized and time is
slotted for packetized transmission. The scheduling policy for $\Nd$
differs from that for $\mathcal{N}_{C}^{(\epsilon)}$ in the following
ways: 
\begin{enumerate}
\item[(1)] A flow/packet generated at time $t$ becomes eligible for transmission
only at the $\lceil\frac{t}{\epsilon}\rceil$-th time slot; 
\item[(2)] A complete packet has to be transmitted in a time slot, i.e., fractions
of packets cannot be transmitted. 
\end{enumerate}
Therefore, the LCFS-PR policy in $\Nc$ cannot be directly implemented.
We need to adapt the LCFS-PR policy to our discrete-time setting.
The adaption was first introduced by El Gamal et al.~\cite{gamal2006throughput_delay}
and has been applied to the design of a scheduling algorithm for a
constrained network \cite{jagabathula2008delay_scheduling}. However,
it was restricted to the setup where all flows had unit size. Unlike that, here we
consider variable size flows. Before presenting the adaptation of LCFS-PR scheme, 
we first make the following definition. 
\begin{defn}[Flow Arrival Time in $\mathcal{N}_{D}^{(\epsilon)}$]
The arrival time of a flow at a node $v$ 
in $\mathcal{N}_{D}^{(\epsilon)}$
is defined as the arrival time of its last packet at node
$v$. That is, assume that its $i$-th packet arrives at node $v$
at time $p_{i},$ then the flow is said to arrive at node $v$ at
time $A=\max_{1\leq i\leq k}p_{i}.$ Similarly, we define the departure
time of a flow from a node as the moment when its last packet leaves
this node. 
\end{defn}

\medskip
\noindent\textbf{Emulation scheme. }Suppose that exogenous flows arrive at
$\Nc$ and $\Nd$ simultaneously. In $\Nc$, flows are served at each
node according to the LCFS-PR policy. For $\Nd,$ we consider a scheduling
policy where packets of a flow will not be eligible for transmission
until the full flow arrives. Let the arrival time of a flow at some
node in $\mathcal{N}_{C}$ be $\tau$ and in $\mathcal{N}_{D}^{(\epsilon)}$
at the same node be $A.$ Then packets of this flow are served in
$\mathcal{N}_{D}^{(\epsilon)}$ using an LCFS-PR policy with the arrival
time $\epsilon\lceil\frac{\tau}{\epsilon}\rceil$ instead of $A.$
If multiple flows arrive at a node at the same time, the flow with
the largest arrival time in $\Nc$ is scheduled first. For a flow
with multiple packets, packets are transmitted in an arbitrary
order.

\smallskip

This scheduling policy is feasible in $\Nd$ if and only if each flow
arrives before its scheduled departure time, i.e., $A\leq\epsilon\lceil\frac{\tau}{\epsilon}\rceil$
for every flow at each node. Let $\delta$ and $\Delta$ be the departure
times of a flow from some node in $\mathcal{N}_{C}^{(\epsilon)}$
and $\mathcal{N}_{D}^{(\epsilon)}$, respectively. Since the departure
time of a flow at a node is exactly its arrival time at the next node
on the route, it is sufficient to show that $\Delta\leq\epsilon\lceil\frac{\delta}{\epsilon}\rceil$
for each flow in every busy cycle of each node in $\mathcal{N}_{C}^{(\epsilon)}.$
This will be proved in the following lemma.
\begin{lem}
\label{lemma:NcNd_single_node} Assume that a flow departs from a node
in $\mathcal{N}_{C}^{(\epsilon)}$ and $\mathcal{N}_{D}^{(\epsilon)}$
at times $\delta$ and $\Delta,$ respectively. Then $\Delta\leq\epsilon\lceil\frac{\delta}{\epsilon}\rceil.$ 
\end{lem}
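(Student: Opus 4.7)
The approach is induction over the flows processed at node $v$, ordered by their continuous-time departure in $\Nc$, carried out one busy cycle of $v$ in $\Nc$ at a time; outside busy cycles both networks are idle at $v$ and the claim is vacuous. Fix such a busy cycle $[t_0,t_1]$ and let $\mathcal{F}$ denote the flows served during it, so $\sum_{f\in\mathcal{F}} x_f^{(\epsilon)} = t_1 - t_0$, with each $x_f^{(\epsilon)}$ an integer multiple of $\epsilon$. A preliminary feasibility claim is needed: inducting over hops along each flow's route, every $f\in\mathcal{F}$ arrives at $v$ in $\Nd$ by time $A_f \le \epsilon\lceil \tau_f/\epsilon\rceil$, where $\tau_f$ is its continuous-time arrival at $v$. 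The base case is the source node, where feasibility is immediate from the SFA congestion-controller design (and the fact that exogenous packets become eligible at the following slot boundary).

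The key structural ingredient is the nested nature of LCFS-PR busy periods: for any $f$ with arrival $\tau_f$ and departure $\delta_f$ at $v$ in $\Nc$, the interval $[\tau_f,\delta_f]$ is itself a ``mini busy cycle'' for $f$, containing exactly those flows whose service delays $f$'s completion. Writing $W_f = \delta_f - \tau_f = \sum_{g:\tau_g \in [\tau_f,\delta_f]} x_g^{(\epsilon)}$, we see that $W_f$ is an integer multiple of $\epsilon$. Ordering the flows in $\mathcal{F}$ by increasing $\delta_f$, the inductive step aims to show $\Delta_f \le \epsilon\lceil \delta_f/\epsilon\rceil$ assuming the claim for every flow with a smaller continuous departure time. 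Using the emulation rule --- LCFS on rounded arrival times $\epsilon\lceil \tau/\epsilon\rceil$, with ties broken by continuous-time arrival order --- I would argue that during the $W_f/\epsilon$ slots ending at $\epsilon\lceil \delta_f/\epsilon\rceil$, the server at $v$ in $\Nd$ is continuously occupied serving flows from $f$'s mini-cycle. Feasibility guarantees that every such flow is already eligible when the server reaches it, the inductive hypothesis rules out interference from earlier-departing flows outside this mini-cycle, and a direct work-counting argument then closes the bound.

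The main obstacle is the delicate interplay between tie-breaking and cascading LCFS preemptions: several flows may round to the same $\epsilon$-slot, and the discrete emulation must reproduce, within a one-slot rounding tolerance, the fine preemption structure of $\Nc$. Verifying carefully that in $\Nd$ the server never wastes a slot on work lying outside the mini-cycle currently being tracked --- a property that should follow from the LCFS priority combined with feasibility and the inductive hypothesis --- is where the real content of the proof lies.
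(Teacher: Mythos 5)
Your skeleton is sound and close in spirit to the paper's: you correctly identify the two nested inductions (feasibility of arrivals at each node via induction along the routes, which in a DAG must be organized as an induction over a topological order of the nodes, since the feasibility claim $A_f\leq\epsilon\lceil\tau_f/\epsilon\rceil$ at a node is exactly the lemma's conclusion applied at the upstream node; and a per-node induction exploiting the nested busy-period structure of LCFS-PR). However, as written there is a genuine gap: the decisive step --- that during the $W_f/\epsilon$ slots in $[\epsilon\lceil\tau_f/\epsilon\rceil,\epsilon\lceil\delta_f/\epsilon\rceil)$ the server of $\Nd$ never spends a slot on work outside $f$'s mini-cycle --- is asserted (``should follow from the LCFS priority combined with feasibility and the inductive hypothesis'') rather than proved, and you yourself flag it as ``where the real content of the proof lies.'' That content is precisely what the paper supplies, via its nested induction on the number of flows in a busy cycle and the two-case analysis (whether some later flow's schedule time $S_i$ exactly equals $S_1+\sum_{l<i}x_l^{(\epsilon)}$ or all such inequalities are strict), which tracks how the discrete LCFS-PR preemptions reproduce the continuous ones. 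A proof that stops at the statement of this property has not yet proved the lemma.

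A secondary inaccuracy: your claim that ``the inductive hypothesis rules out interference from earlier-departing flows outside this mini-cycle'' misidentifies the threat. Flows that departed in $\Nc$ before $\tau_f$ are indeed gone from $\Nd$ by $S_f=\epsilon\lceil\tau_f/\epsilon\rceil$ by the induction on departure times; but the flows that can actually compete during the window are those that arrived \emph{before} $\tau_f$ and are still preempted at $\tau_f$ in $\Nc$ --- under LCFS-PR these depart \emph{after} $f$, so your departure-ordered hypothesis says nothing about them. They must instead be excluded by the priority rule (they have strictly smaller rounded arrival times, or equal rounded time with earlier continuous arrival) together with an argument that eligible mini-cycle work is available in every slot of the window, e.g.\ by comparing the work that has become eligible by a slot boundary $s$ (flows with $\tau_g\leq s$) against the at most $s-S_f$ units the $\Nd$ server can have completed, versus the continuously busy server of $\Nc$. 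Such a work-counting argument would complete your plan and would constitute a genuinely different (and arguably more compact) route than the paper's case analysis; also note that the base case of feasibility at source nodes has nothing to do with SFA --- it is simply that exogenous arrivals reach $\Nc$ and $\Nd$ simultaneously and become eligible at the next slot boundary. But as submitted, the proof is incomplete.
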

\begin{proof}
	As the underlying graph $\mathcal{G}$ for $\Nc$ and $\Nd$ is a
	DAG, there is a topological ordering of the vertices $\mathcal{V}.$
	Without loss of generosity, assume that $v_{1},\ldots,v_{n}$ is a
	topological order of $\mathcal{G}.$ We prove the statement via induction
	on the index of vertex. 
	
	\textbf{Base case. }We show that the statement holds for node $v_{1},$
	i.e., $\Delta\leq\epsilon\lceil\frac{\delta}{\epsilon}\rceil$ for
	each flow in every busy cycle of node $v_{1}.$ We do so via induction
	on the number $ k $ of flows that contribute to the busy cycle of node $v_{1}$
	in $\Nc.$ Consider a busy cycle consisting of flows numbered $1,\ldots,k$
	with arrivals at times $\tau_{1}\leq\ldots\leq\tau_{k}$ and departures
	at times $\delta_{1},\ldots,\delta_{k}.$ We denote by $c_{i}=(j_{i},x_{i})$
	the type of flow numbered $i.$ Let the arrival times of these flows
	at node $v_{1}$ in $\mathcal{N}_{D}^{(\epsilon)}$ be $A_{1},\ldots,A_{k}$
	and departures at times $\Delta_{1},\ldots,\Delta_{k}.$ As the first
	node in the topological ordering, node $v_{1}$ only has external
	arrivals. Hence $A_{i}=\epsilon\lceil\frac{\tau_{i}}{\epsilon}\rceil$
	for $i=1,\ldots,k$. We need to show that $\Delta_{i}\leq\epsilon\lceil\frac{\delta_{i}}{\epsilon}\rceil,$
	for $i=1,\ldots,k.$ For brevity, we let $S_{i}$ denote the schedule
	time for the $i$-th flow in $\mathcal{N}_{D}^{(\epsilon)}.$ We have
	$S_{i}=\epsilon\lceil\frac{\tau_{i}}{\epsilon}\rceil.$
	
	\textbf{Nested base case. }For $k=1,$ since this busy cycle consists
	of only one flow of type $(j_{1},x_{1}),$ no arrival will occur at
	this node until the departure of this particular flow. In other words,
	for each flow that arrives at node $v_{1}$ after time $\tau_{1},$
	its arrival time, denoted by $\tau_{f},$ should satisfy $\tau_{f}\geq\tau_{1}+x_{1}^{(\epsilon)}.$
	Thus $A_{f}\leq\epsilon\left\lceil \frac{\tau_{f}}{\epsilon}\right\rceil ,$
	and the schedule time for flow $f$ in $\mathcal{N}_{D}^{(\epsilon)}$
	should satisfy 
	\[
	S_{f}=\epsilon\left\lceil \frac{\tau_{f}}{\epsilon}\right\rceil \geq\epsilon\left(\left\lceil \frac{\tau_{1}}{\epsilon}\right\rceil +\frac{x_{1}^{(\epsilon)}}{\epsilon}\right).
	\]
	Hence in $\mathcal{N}_{D}^{(\epsilon)}$, flow $f$ will not become
	eligible for service until node $v_{1}$ transmits all packets of
	the flow 1. Therefore, $\Delta_{1}=S_{1}+x_{1}^{(\epsilon)}=\epsilon\left\lceil \frac{\delta_{1}}{\epsilon}\right\rceil .$
	Thus the induction hypothesis is true for the base case.
	
	\textbf{Nested induction step. }Now assume that the bound $\Delta\leq\epsilon\lceil\frac{\delta}{\epsilon}\rceil$  holds for all busy
	cycles consisting of $k$ flows at $v_{1}$ in $\Nc$. Consider a busy cycle
	of $k+1$ flows.
	
	Note that in $\mathcal{N}_{C}^{(\epsilon)},$ the LCFS-PR service
	policy determines that the first flow of the busy cycle is the last
	to depart. That is, $\delta_{1}=\tau_{1}+\sum_{i=1}^{k+1}x_{i}^{(\epsilon)}.$
	Since flow $i$ is in the same busy cycle of flow $1$ in $\mathcal{N}_{C}^{(\epsilon)},$
	its arrival time should satisfy $\tau_{i}<\tau_{1}+\sum_{l=1}^{i-1}x_{l}^{(\epsilon)},$
	for $i=2,\ldots,k+1.$ Thus $\left\lceil \frac{\tau_{i}}{\epsilon}\right\rceil \leq\left\lceil \frac{\tau_{1}}{\epsilon}\right\rceil +\sum_{l=1}^{i-1}\frac{x_{l}^{(\epsilon)}}{\epsilon},$
	i.e., $S_{i}\leq S_{1}+\sum_{l=1}^{i-1}x_{l}^{(\epsilon)}.$ Let us
	focus on the departure times of these flows in $\Nd.$
	
	\smallskip{}
	
	Case (i): There exists some $i$ such that $S_{i}=S_{1}+\sum_{l=1}^{i-1}x_{l}^{(\epsilon)}.$
	Let $i^{*}$ be the smallest $i$ satisfying this equality. Then under
	the LCFS-PR policy in $\mathcal{N}_{D}^{(\epsilon)},$ the $i^{*}$-th
	flow will be scheduled for service right after the departure of flow
	$1.$ That is, $\Delta_{1}=S_{1}+\sum_{l=1}^{i^{*}-1}x_{l}^{(\epsilon)}=\epsilon\left\lceil \frac{\tau_{1}}{\epsilon}\right\rceil +\sum_{l=1}^{i^{*}-1}x_{l}^{(\epsilon)}\leq\epsilon\left\lceil \frac{\tau_{1}}{\epsilon}\right\rceil +\sum_{l=1}^{k+1}x_{l}^{(\epsilon)}=\epsilon\left\lceil \frac{\delta_{1}}{\epsilon}\right\rceil .$
	The remaining flows numbered $i^{*},\ldots,k+1$ depart exactly as
	if they belong to a busy cycle of $k+2-i^{*}$ flows.
	
	\smallskip{}
	
	Case (ii): For $i=2,\ldots,k+1,$ $S_{i}<S_{1}+\sum_{l=1}^{i-1}x_{l}^{(\epsilon)}.$
	Then with the LCFS-PR policy in $\mathcal{N}_{D}^{(\epsilon)},$ flows
	numbered $2,\ldots,k+1$ would have departure times as if they are
	from a busy cycle of $k$ flows. The service for flow $1$ will be
	resumed after the departure of these $k$ flows. In particular, we
	will show that the resumed service for flow $1$ will not be interrupted
	by any further arrival. Then $\Delta_{1}=S_{1}+\sum_{l=1}^{k+1}x_{l}^{(\epsilon)}=\epsilon\left\lceil \frac{\tau_{1}}{\epsilon}\right\rceil +\sum_{l=1}^{k+1}x_{l}^{(\epsilon)}=\epsilon\left\lceil \frac{\delta_{1}}{\epsilon}\right\rceil .$
	
	Since this particular busy cycle in $\mathcal{N}_{C}^{(\epsilon)}$
	consists of $k+1$ flows, arguing similarly as the base case, each
	flow $f$ that arrives at node $v_{1}$ after this busy cycle should
	satisfy $\tau_{f}\geq\delta_{1}.$ Hence its schedule time $S_{f}$
	in $\mathcal{N}_{D}^{(\epsilon)}$ satisfies $S_{f}=\epsilon\left\lceil \frac{\tau_{f}}{\epsilon}\right\rceil \geq\epsilon\left\lceil \frac{\delta_{1}}{\epsilon}\right\rceil =\epsilon\left\lceil \frac{\tau_{1}}{\epsilon}\right\rceil +\sum_{l=1}^{k+1}x_{l}^{(\epsilon)}=S_{1}+\sum_{l=1}^{k+1}x_{l}^{(\epsilon)}.$
	Therefore, flow $f$ will not be eligible for service until the departure
	of flow $1$ in $\mathcal{N}_{D}^{(\epsilon)}.$ This completes the
	proof of the base case $v_{1}$.
	
	\textbf{Induction step. }Now assume that for each $\tau=1,\cdots,t,$
	$\Delta\leq\epsilon\lceil\frac{\delta}{\epsilon}\rceil$ holds for
	each flow in every busy cycle of node $v_{\tau}.$ We show that this
	holds for node $v_{t+1.}$
	
	We can show that $\Delta\leq\epsilon\lceil\frac{\delta}{\epsilon}\rceil$
	via induction on the number of flows that contribute to the busy cycle
	of node $v_{t+1}$ in $\Nc,$ following exactly the same argument
	for the proof of base case $v_{1}.$ Omitting the details, we point
	to the difference from the proof of nested induction step for $v_1$. 
	
	Note that by the topological ordering, flows arriving at node $v_{t+1}$
	are either external arrivals or departures from nodes $v_{1},\ldots,v_{t}.$
	By the induction hypothesis, the arrival times of each flow at $v_{t+1}$
	in $\Nc$ and $\Nd$, denoted by $\tau$ and $A,$ respectively,
	satisfy $A\leq\epsilon\lceil\frac{\tau}{\epsilon}\rceil.$ 
	
	This completes the proof of this lemma.
\end{proof}

\smallskip

We now prove Theorem \ref{thm:scheduling} by Lemma \ref{lemma:NcNd_single_node}.
\begin{proof}[Proof of Theorem \ref{thm:scheduling}]
Suppose $\mathcal{N}_{D}^{(\epsilon)}$ is operating under the LCFS-PR
scheduling policy using the arrival times in $\mathcal{N}_{C}^{(\epsilon)},$
where LCFS-PR queue management is used at each node. Let $D_S^{(j,x),\epsilon} (\blambda)$
and $D_{C}^{(j,x),\epsilon} (\blambda)$ denote the sample mean of the delay over all type-$(j,x)$
flows in $\mathcal{N}_{D}^{(\epsilon)}$ and $\mathcal{N}_{C}^{(\epsilon)}$
respectively. Then it follows from Lemma~\ref{lemma:NcNd_single_node}
that $D_S^{(j,x),\epsilon} (\blambda) \leq D_{C}^{(j,x),\epsilon} (\blambda).$ 
From Theorem
\ref{thm:Delay_Nc}, we have 
\[
D_{C}^{(j,x),\epsilon} (\blambda) =\sum_{v:v\in j}\frac{\xe}{1-f_{v}^{(\epsilon)}}, \quad \mbox{with probability } 1.
\]
Hence, with probability 1, 
\[
D^{(j,x),\epsilon}_S (\blambda) \leq\sum_{v:v\in j}\frac{\xe}{1-f_{v}^{(\epsilon)}}\leq\frac{C_{0}}{C_{0}-1}\sum_{v:v\in j}\frac{\xe}{1-f_{v}}, \label{eq:delay_Nd}
\]
where the last inequality follows from Eq.~(\ref{eq:node_load_Nd_gap}).

By definition,  we have $\frac{1}{1-f_{v}}\leq\frac{1}{1-\rho_{j}(\boldsymbol{\lambda})}$ for each $v\in j.$
Therefore, with probability 1, 
\begin{align*}
D^{(j,x),\epsilon}_S (\blambda) & \leq\frac{C_{0}}{C_{0}-1}\cdot\frac{\epsilon\left\lceil x/\epsilon\right\rceil d_{j}}{1-\rho_{j}(\boldsymbol{\lambda})} \\
& \leq\frac{C_{0}}{C_{0}-1}\cdot\frac{\epsilon (x/\epsilon + 1) d_{j}}{1-\rho_{j}(\boldsymbol{\lambda})} \\
& \leq\frac{C_{0}}{C_{0}-1}\Big(\frac{xd_{j}}{1-\rho_{j}(\boldsymbol{\lambda})} + \frac{\epsilon d_{j}}{1-\rho_{j}(\boldsymbol{\lambda})}\Big) \\
& \leq \frac{C_{0}}{C_{0}-1}\Big(\frac{xd_{j}}{1-\rho_{j}(\boldsymbol{\lambda})} +  d_{j} \Big),
\end{align*}
where the last inequality holds because $ \epsilon \le 1-\rho_j(\blambda) $ by the definition of $ \epsilon $ in Eq.~(\ref{eq:epsilon}).
This completes the proof of Theorem \ref{thm:scheduling}.
\end{proof}

\begin{rem}
	Consider a continuous-time network $\mathcal{N}_{C}$ with the same
	topology and exogenous arrivals as $\Nd,$ while each type-$(j,x)$
	flow maintains the original size $x.$ Using the same argument employed
	for the proof of Theorem \ref{thm:Delay_Nc}, we can show that the expected
	delay of a type-$(j,x)$ flow in $\mathcal{N_{C}},$ $\E[\bar{D}^{(j,x)} (\blambda)]$,
	is such that $\E[\bar{D}^{(j,x)} (\blambda)]=\sum_{v:v\in j}\frac{x}{1-f_{v}}.$ 
	For the discrete-time network $\Nd,$ from Eq.\ (\ref{eq:delay_Nd}), we have 
	\[
    D^{(j,x),\epsilon}_S (\blambda) \leq \frac{C_{0}}{C_{0}-1}\Big(\sum_{v:v\in j}\frac{x}{1-f_{v}} +  d_{j} \Big),
	\]
	Therefore, $\Nd$ achieves essentially the same flow delay bound as $\mathcal{N}_{C}$.
	
\end{rem}

\section{Discussion\label{sec:discussion}}

We have emphasized the delay performance of our centralized congestion
control and scheduling, but have not discussed the implementation
complexity of this scheme. Recall that our approach consists of two
main sub-routines: (a) Simulation of the bandwidth sharing network
operating under the SFA policy to obtain the injection times for congestion
control; (b) Simulation of the continuous-time network $\Nc$ to determine
the arrival times of scheduling in $\Nd.$ We can see that the overall
complexity of our algorithm is dominated by the computation complexity
of the SFA policy. As one has to compute the sum of exponentially
many terms for every event of exogenous flow arrival and flow injection
into network, the complexity of SFA is exponential in $M,$ the number
of nodes in the network. One future direction is to develop a variant
of the SFA policy with low-complexity. As we mentioned before, there
is a close relationship between the SFA policy and proportional fairness
algorithm that maximizes network utility. In particular, Massouli\'{e}
\cite{massoulie2007fairness} proved that the SFA policy converges
to proportional fairness under the heavy-traffic limit. With this,
one can begin by implementing the proportional fairness. Indeed,
some recent work~\cite{nagaraj2016numfabric,perry17flowtune} has demonstrated the feasibility and efficiency of
applying network utility maximization (NUM) for rate allocation (congestion
control).

Our theoretical results are stated under the assumption of Poisson arrival processes. 
This is in fact not a restriction, as it can be relaxed using a Poissonization argument that has
been applied in previous work \cite{gamal2006throughput_delay,shah2014SFA,jagabathula2008delay_scheduling}.
In particular, incoming flows on each route are passed through a ``regularizer'' that
emits flows according to a Poisson process (with a dummy flow emitted if the regularizer is empty).  
The rate of the regularizer can be chosen to lie between the arrival rate and 
the network capacity. Consequently, the regularized
arrivals are Poisson, whereas the effective gap to the capacity, $1-\rho_j(\blambda),$
is decreased by a constant factor, resulting in an extra additive term of $ O \left( \frac{1}{1-\rho_j(\blambda)} \right)$ in our delay bound.

Our theoretical results suggest that a congestion control policy that is 
proportional fair along with packet scheduling in the internal network that is simple
discrete time LCFS-PR should achieve an excellent performance in practice. And such
an algorithm does not seem to be far from being implementable as exemplified
by \cite{perry2014fastpass, perry17flowtune}.

\bibliographystyle{ACM-Reference-Format}
\bibliography{datacenter}

\end{document}